\newcolumntype{L}[1]{>{\raggedright\arraybackslash}p{#1}}
\newcolumntype{C}[1]{>{\centering\arraybackslash}m{#1}}
\newcolumntype{R}[1]{>{\raggedleft\arraybackslash}p{#1}}
\renewcommand{\epsilon}{\varepsilon}
\newcommand{\ifarxiv}[2]{#2}
\renewcommand{\ifarxiv}[2]{#1}
\newtheorem{theorem}{Theorem}[section]
\newtheorem*{claim*}{Claim}
\newtheorem{condition}[theorem]{Condition}
\newtheorem{lemma}[theorem]{Lemma}
\newtheorem{proposition}[theorem]{Proposition}
\newtheorem{corollary}[theorem]{Corollary}
\theoremstyle{definition}
\newtheorem{definition}[theorem]{Definition}
\newtheorem{remark}[theorem]{Remark}
\newtheorem*{remark*}{Remark}
\newtheorem{assumption}{Assumption}
\renewcommand{\emptyset}{\varnothing}
\newcommand{\E}[1]{\mathbb{E}\left[{#1}\right]}
\renewcommand{\Pr}[2][]{ \ifthenelse{\isempty{#1}}
  {\mathop{\mathbf{Pr}}\left[#2\right]} {\mathop{\mathbf{Pr}}_{#1}\left[#2\right]} }
\newcommand{\one}[1]{\mathbbm{1}\left[#1\right]}
\newcommand{\abs}[1]{\left\vert#1\right\vert}
\newcommand{\tuple}[1]{\left(#1\right)}
\newcommand{\eps}{\varepsilon}
\newcommand{\tp}{\tuple}
\newcommand{\defeq}{:=}
\newcommand{\lb}{C}
\newcommand{\DTV}[2]{d_{\mathrm{TV}}\left({#1},{#2}\right)}
\def\*#1{\bm{#1}} 
\def\+#1{\mathcal{#1}} 
\def\-#1{\mathrm{#1}} 
\def\=#1{\mathbb{#1}} 
\def\prob#1#2#3{\goodbreak\begin{list}{}{\labelwidth\z@ \itemindent-\leftmargin
                        \itemsep\z@  \topsep6\p@\@plus6\p@
                        \let\makelabel\descriptionlabel}
                \item[\it Name]#1
               \item[\it Instance]                #2
                \item[\it Output]#3
                \end{list}}
\newcommand{\linbf}{\textnormal{\textsf{LinearBF}}}
\newcommand{\berdiv}{\textnormal{\textsf{BernoulliDivision}}}
\newcommand{\subbf}{\textnormal{\textsf{SubtractBF}}}
\newcommand{\poly}{{\rm poly}}  
\newcommand{\True}{\mathtt{True}}
\newcommand{\resolve}{\textnormal{\textsf{Resolve}}}
\newcommand{\sample}{\mathsf{Sample}}
\newcommand{\pred}{\textnormal{\textsf{pred}}}
\newcommand{\lsample}{\textnormal{\textsf{LocalSample}}}
\newcommand{\eval}{\textnormal{\textsf{Evaluate}}}
\renewcommand{\defeq}{\triangleq}
\newcommand{\chk}{\mathsf{Check}}
\newcommand{\test}{\mathsf{Test}}
\newcommand{\filter}{\mathsf{Filter}}
\title{Local Gibbs sampling beyond local uniformity}
\date{}
\author{Hongyang Liu, Chunyang Wang, Yitong Yin}
\address{State Key Laboratory for Novel Software Technology, New Cornerstone Science Laboratory, Nanjing University, 163 Xianlin Avenue, Nanjing, Jiangsu Province, 210023, China. \textnormal{E-mail: \texttt{liuhongyang@smail.nju.edu.cn}, \texttt{wcysai@smail.nju.edu.cn}, \texttt{yinyt@nju.edu.cn}}}
  \author{Author(s)}
\begin{document}

\begin{abstract}
Local samplers are algorithms that generate random samples based on local queries to high-dimensional distributions, ensuring the samples follow the correct induced distributions while maintaining time complexity that scales locally with the query size. 
These samplers have broad applications, including deterministic approximate counting~\cite{HWY22c,feng2023towards}, sampling from infinite or high-dimensional Gibbs distributions~\cite{AJ22,HWY22a}, and providing local access to large random objects~\cite{amartya2020local}.

In this work, we present local samplers for Gibbs distributions of spin systems.
%
Specifically, we design linear-time local samplers for:
\begin{itemize}
    \item spin systems with soft constraints, including the first local sampler for near-critical Ising models;
    \item truly repulsive spin systems, represented by the first local sampler for uniform proper $q$-colorings, with $q=O(\Delta)$ colors on graphs with maximum degree $\Delta$.
\end{itemize}
These local samplers are efficient beyond the ``local uniformity'' threshold, which imposes unconditional marginal lower bounds --- a key assumption required by all prior local samplers.
Our results show that, in general, local sampling is not significantly harder than global sampling for spin systems. As an application, our results also imply local algorithms for probabilistic inference in the same near-critical regimes. 
\end{abstract}	

\maketitle

\section{Introduction}

Spin systems, which originated in statistical physics, are stochastic models characterized by local interactions.
These models have not only advanced our understanding of physical phenomena, such as phase transitions and criticality, 
but have also become central to machine learning and theoretical computer science, 
particularly in the study of sampling and inference problems in complex distributions.

Let $q\ge 2$ be an integer. 
A $q$-spin system $\+S=(G=(V,E),\bm{\lambda}=(\lambda_v)_{v\in V},\bm{A}=(A_e)_{e\in E})$ is defined on a finite graph $G = ( V, E)$,
where each vertex is associated with an \emph{external field} $\lambda_v\in\mathbb{R}^q_{\geq 0}$, and each edge $e\in E$  is associated with an \emph{interaction matrix} $A_{e}\in\mathbb{R}^{q\times q}_{\geq 0}$. 
A configuration $\sigma \in [q]^V$ assigns a spin state from $[q]$ to each vertex $v \in V$. 

The \emph{Gibbs distribution} $\mu=\mu^{\+S}$ over all configurations $\sigma \in [q]^V$ is given by:
\[
\mu(\sigma)\defeq \frac{w(\sigma)}{Z}, \qquad w(\sigma)\defeq\prod\limits_{v\in V}\lambda_v(\sigma(v))\prod\limits_{e=(u,v)\in E}A_e(\sigma(u),\sigma(v)),
\]
where the the normalizing factor  $Z=\sum_{\sigma\in[q]^V}w(\sigma)$ is the \emph{partition function}.


A central question in the study of spin systems is sampling from their associated Gibbs distributions. For well-known models such as the hardcore model and the Ising model, a critical threshold determined by the system's parameters has been identified, beyond which sampling from the Gibbs distribution becomes \textbf{NP}-hard~\cite{SS14,galanis2016inapproximability}. Recent breakthroughs have shown that the Glauber dynamics, a widely-used Markov chain, mixes rapidly up to these critical thresholds for specific spin systems, including the hardcore model and the Ising model~\cite{ALO20, chen2020rapid, CLV21, chen2021rapid, anari2022entropic, CE22,CFYZ22}. These results provide a comprehensive characterization of the computational phase transition inherent in sampling from the Gibbs distributions of such spin systems.

\subsection{Local sampling and local uniformity}
Recent research has increasingly focused on \emph{local} sampling techniques for high-dimensional Gibbs distributions~\cite{AJ22,anand23sphere,feng2023towards}. Rather than directly drawing a global sample from the Gibbs distribution $\mu$, such algorithms aim to answer on-demand \emph{local} queries on a small subset of vertices $\Lambda\subseteq V$, and returns a sample approximately distributed according to the marginal distribution of $\mu$ induced on $\Lambda$, at a \emph{local} cost that depends only on the size of the query set $\abs{\Lambda}$ (and not on the total size $|V|$ of the spin system).
For a subset of vertices $\Lambda\subseteq V$, the marginal distribution $\mu_{\Lambda}$ is defined as:
\[
\forall \tau\in[q]^\Lambda,\quad \mu_\Lambda(\tau)\triangleq\sum_{\sigma\in[q]^V:\sigma_\Lambda=\tau}\mu(\sigma).
\]
Then, the local sampling problem is defined as follows:

\begin{center}
  \begin{tcolorbox}[=sharpish corners, colback=white, width=1\linewidth]
    \begin{center}
  \textbf{The local sampling problem}
    \end{center}
    \vspace{6pt}
\textbf{Input}: A spin system $\+S=(G,\bm{\lambda},\bm{A})$, where $G=(V,E)$, and a subset of vertices $\Lambda\subseteq V$;

\textbf{Goal}: Generate a sample $X\sim \mu_{\Lambda}$ in time that scales near-linearly in $|\Lambda|$.

  \end{tcolorbox} 
\end{center}
Local sampling can, of course, solve global sampling by simply querying all vertices or using the auto-regressive sampler for self-reducible problems, as in~\cite{AJ22, HWY22a}. 
Beyond this, these local samplers offer the ability to ``scale down'' the sampling process, 
addressing the challenge of providing local access to large random objects~\cite{amartya2020local}, 
where sublinear computational costs are required for sublinear-size queries.
For applications, efficient local samplers directly imply efficient algorithms for probabilistic inference for self-reducible problems, and can possibly lead to efficient approximate counting algorithms~\cite{HWY22c, feng2023towards, anand24approximate,anand2025sinkfree}. 

However, existing local samplers for spin systems~\cite{AJ22,anand23sphere,feng2023towards} rely on the assumption of \emph{unconditional marginal lower bounds}, also known as the ``\emph{local uniformity}'' property.
This assumption requires that the marginal distribution of each vertex remains nearly identical across all neighboring configurations, which may be excessively restrictive for various problems.

Consider, for example, the Ising model with edge activity $\beta>0$  and an arbitrary external field. 
The Ising model, introduced by Ising and Lenz~\cite{Ising1925Beitrag}, has been extensively studied in various fields.
Formally, it is a $2$-spin system with $A_e=\begin{pmatrix}
    \beta & 1\\
    1 & \beta\\
\end{pmatrix}$ at each edge $e\in E$ and arbitrary $\lambda_v$ at each $v\in V$.
Sampling from its Gibbs distribution can be achieved through Glauber dynamics, which mixes rapidly under the well-known ``uniqueness condition'':
\begin{equation}\label{eq:Ising-uniqueness-condition}
    \beta \in \left(\frac{\Delta-2}{\Delta},\frac{\Delta}{\Delta-2}\right),
\end{equation} 
where $\Delta$ is the maximum degree of the underlying graph.  Beyond this, either Glauber dynamics becomes torpidly mixing, or the sampling problem itself becomes intractable.
 In contrast, the requirement of unconditional marginal lower bounds for such models imposes a significantly stricter condition:
\begin{align}\label{eq:Ising-local-uniformity}
 \beta\in \left(\left(\frac{\Delta-1}{\Delta+1}\right)^{1/\Delta},\left(\frac{\Delta+1}{\Delta-1}\right)^{1/\Delta}\right)=\left(1-\frac{1}{\Theta(\Delta^2)},1+\frac{1}{\Theta(\Delta^2)}\right).   
\end{align}

Much greater challenges arise in “truly repulsive” spin systems --- most notably, in sampling uniform proper $q$-colorings. Given a graph $G=(V,E)$, a proper $q$-coloring is an assignment $\sigma:V\to [q]$ such that $\sigma(u)\neq \sigma(v)$ for all $(u,v)\in E$. 
This is one of the most extensively studied sampling problems. 
(Global) sampling algorithms have gradually lowered the tractability threshold for proper $q$-colorings to $q>1.809\Delta$~\cite{vigoda1999improved,chen2019improved,carlson2025flip}, where $\Delta$ is the maximum degree of the graph, while the uniqueness condition for proper $q$-colorings is given by $q\ge \Delta+1$.
On the other hand, the truly repulsive nature of proper colorings precludes local uniformity:
the marginal probability of a color at a vertex can drop to zero when a neighbor is assigned that color, so any method that relies on an unconditional lower bound on the marginals fails.
Consequently, to this day, no local sampler is known for uniform proper $q$-colorings.



This stark discrepancy raises a fundamental question: Do local samplers exist for such models in near-critical regimes? Or does local sampling inherently require a significantly more stringent critical condition compared to global sampling? 

\subsection{Our results}
In this paper, we address the aforementioned open question by designing new linear-time local samplers for two fundamental classes of spin systems under near-critical conditions: models with soft constraints, including the Ising model, and repulsive models, represented by proper $q$-colorings, showing that local sampling remains feasible near the global threshold for these models.

Our main contributions, both the first of their kind, are:
\begin{itemize}
    \item a local sampler for the Ising model in near-critical regimes;
    \item a local sampler for uniform proper $q$-colorings using $q=O(\Delta)$ colors.
\end{itemize}





Specifically, our local samplers assume the following natural access model for spin systems.

\begin{assumption}[probe access]\label{assumption:access-model}
Let $\+S=(G=(V,E),\bm{\lambda},\bm{A})$ be a $q$-spin system.
We assume:
\begin{itemize}
       \item For each $v\in V$, each neighbor $u\in N(v)$ can be accessed in $O(1)$ time.
    \item Each entry in every $\lambda_v$ and $A_e$ can be retrieved in $O(1)$ time.
\end{itemize}
These can be achieved by storing $G$ as an adjacency list and representing $\lambda_v$ and $A_e$ as arrays.
\end{assumption}

\subsubsection{Local sampler for spin systems with soft constraints}
Our first general result provides a linear-time local sampler for $q$-spin systems that satisfy the following sufficient condition.

\begin{condition}[tractable regime for spin systems with soft constraints]\label{cond:main}
Let $\delta>0$ be a parameter, and $\+S=(G,\bm{\lambda},\bm{A})$ be a $q$-spin system on a graph $G=(V,E)$ with maximum degree $\Delta\geq 1$.
The following condition holds:
\begin{itemize}
 \item (\textbf{Normalized}) All $\lambda_v$ and $A_e$ are normalized, i.e.,
\[
\forall v\in V,\quad \sum\limits_{c\in [q]}\lambda_v(c)=1\quad\text{ and }\quad\forall e\in E,\quad \max\limits_{i,j\in [q]}A_e(i,j)=1.
\]
 This normalization can be enforced without altering the Gibbs distribution.
 \item (\textbf{Soft constraints}) For every edge $e=(u,v)\in E$ and every pair of spin values $c_1,c_2\in [q]$,
    \[
    A_e(c_1,c_2) \geq \lb(\Delta,\delta)\defeq 1 - \frac{1-\delta}{2\Delta}.
    \]
\end{itemize}
\end{condition}

The following theorem presents our local sampler for spin systems with soft constraints. 
\begin{theorem}[local sampler for spin systems with soft constraints]\label{theorem:local-sampler}
There exists an algorithm that, given access (as in \Cref{assumption:access-model}) to a $q$-spin system $\+S=(G,\bm{\lambda},\bm{A})$ satisfying \Cref{cond:main}, with Gibbs distribution $\mu=\mu^{\+S}$,
and given a subset of vertices $\Lambda\subseteq V$, outputs a perfect sample $X\sim \mu_{\Lambda}$ in expected time $O\left(\Delta\log q \cdot |\Lambda|\right)$. 
\end{theorem}

The local sampler in \Cref{theorem:local-sampler} is perfect and terminates in time linear in $|\Lambda|$ in expectation.

Next, we apply \Cref{theorem:local-sampler} to one of the most important spin systems with soft constraints: the Ising model. 
Recall the definition of the Ising model, which is a 2-spin system with an interaction matrix 
$$A_e=\begin{pmatrix}
    \beta & 1\\
    1 & \beta\\
\end{pmatrix}$$ 
at each edge $e\in E$ and an arbitrary external field $\lambda_v$ at each vertex $v\in V$.
Note that this standard definition of the Ising model does not satisfy the normalization condition in \Cref{cond:main} when $\beta>1$. 
However, we can transform such a (ferromagnetic) Ising model to satisfy this normalization condition,
by using $A_e/\beta$ as the interaction matrix, without altering its Gibbs distribution. 

Applying \Cref{theorem:local-sampler} gives the following corollary, where \Cref{assumption:access-model} is implicitly assumed.

\begin{corollary}[local Ising sampler]\label{cor:ising}
There exists an algorithm that, given a Ising model with Gibbs distribution $\mu$ on a graph $G=(V,E)$ with maximum degree $\Delta\geq 1$, arbitrary external fields $\lambda_v$ at each $v\in V$, and edge activity $\beta$ satisfying \begin{equation}\label{eq:Ising-condition}
    \beta \in\left(\frac{\Delta-0.5}{\Delta},\frac{\Delta}{\Delta-0.5}\right),
     \end{equation}
and given a subset of vertices $\Lambda\subseteq V$, outputs a perfect sample $X\sim \mu_{\Lambda}$ in expected time $O\left(\Delta\cdot |\Lambda|\right)$. 
\end{corollary}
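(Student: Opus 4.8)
The plan is to reduce \Cref{cor:ising} to \Cref{theorem:local-sampler}: I will show that any Ising model obeying \eqref{eq:Ising-condition} can be renormalized, without changing its Gibbs distribution, so that \Cref{cond:main} holds with $q=2$ and some $\delta>0$, and then the claimed running time follows since $\log q=\Theta(1)$.

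First I would renormalize the external fields. Since $\mu^{\+S}$ is unchanged when each $\lambda_v$ is multiplied by an arbitrary positive scalar, we may assume $\lambda_v(1)+\lambda_v(2)=1$ for every $v\in V$; the renormalized entries are still computable in $O(1)$ time, so \Cref{assumption:access-model} is preserved. Next I would renormalize the interaction matrices, distinguishing two cases. If $\beta\le 1$, then $\max_{i,j}A_e(i,j)=1$ already and the smallest entry of $A_e$ is $\beta$. If $\beta\ge 1$, then $\max_{i,j}A_e(i,j)=\beta$, so I replace every $A_e$ by $A_e/\beta=\bigl(\begin{smallmatrix}1&1/\beta\\1/\beta&1\end{smallmatrix}\bigr)$; multiplying \emph{all} edge matrices by the global constant $1/\beta$ rescales every weight $w(\sigma)$ by the same factor $\beta^{-|E|}$ and hence leaves $\mu^{\+S}$ unchanged, and the smallest entry is now $1/\beta$. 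In both cases the renormalized system is a normalized $2$-spin system whose smallest interaction entry equals $m\defeq\min\{\beta,\,1/\beta\}$.

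It then remains to check the permissive condition. Rewriting \eqref{eq:Ising-condition} as $\beta\in\bigl(1-\tfrac1{2\Delta},\,(1-\tfrac1{2\Delta})^{-1}\bigr)$, in either case one gets $m>1-\tfrac1{2\Delta}$, hence $2\Delta(1-m)<1$, and we may take $\delta\defeq 1-2\Delta(1-m)>0$. With this choice $\lb(\Delta,\delta)=1-\tfrac{1-\delta}{2\Delta}=1-(1-m)=m$, so every entry of every renormalized $A_e$ is at least $\lb(\Delta,\delta)$, and \Cref{cond:main} holds with parameter $\delta$. Invoking \Cref{theorem:local-sampler} then produces a perfect sample $X\sim\mu_\Lambda$ in expected time $O(|\Lambda|\Delta\log 2)=O(|\Lambda|\Delta)$.

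I do not anticipate a substantial obstacle; the entire content is the reduction above, and the only points needing care are (i) verifying that both renormalization steps preserve both the Gibbs distribution and the $O(1)$-time access model, and (ii) noting that strict membership of $\beta$ in the \emph{open} interval \eqref{eq:Ising-condition} is exactly what makes $\delta$ strictly positive --- when $\beta=1$ we have $m=1$ and $\mu^{\+S}$ is simply a product of independent spins, for which the conclusion is immediate, and otherwise $\beta\neq 1$ forces $m<1$ and hence $\delta\in(0,1)$.
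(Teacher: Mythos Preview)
Your proposal is correct and follows essentially the same approach as the paper: renormalize the interaction matrices (dividing by $\beta$ when $\beta>1$) so that \Cref{cond:main} holds, and then invoke \Cref{theorem:local-sampler} with $q=2$. Your write-up is in fact more explicit than the paper's, which simply notes the renormalization and applies the theorem without spelling out the choice of $\delta$.
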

The condition in~\eqref{eq:Ising-condition} falls within the same regime of $\left(1-\Theta\left(\frac{1}{\Delta}\right),1+\Theta\left(\frac{1}{\Delta}\right)\right)$ as the uniqueness condition in \eqref{eq:Ising-uniqueness-condition}, substantially improving upon the local uniformity condition in \eqref{eq:Ising-local-uniformity}.

\subsubsection{Local sampler for proper $q$-colorings}

Our next result establishes a linear-time local sampler for one of the most fundamental repulsive spin systems: the uniform proper $q$-coloring model. Given a graph $G = (V, E)$, a proper $q$-coloring is an assignment $\sigma: V \to [q]$ such that $\sigma(u) \neq \sigma(v)$ for every edge $(u, v) \in E$.  This classical combinatorial model can be viewed as a $q$-spin system with truly repulsive hard constraints: the interaction matrix assigns zero weight to configurations where adjacent vertices share the same color (i.e., zeros on the diagonal) and unit weight otherwise (i.e., ones off the diagonal). 

Uniform sampling of proper $q$-colorings has long been a central problem in the study of algorithmic sampling and counting. In a seminal work, Jerrum~\cite{jerrum1995simple} established optimal mixing of the Glauber dynamics for proper $q$-colorings under the condition $q > 2\Delta$, where $\Delta$ denotes the maximum degree of the graph. This threshold was later improved by Vigoda~\cite{vigoda1999improved}, who showed that the flip dynamics mixes in $O(n \log n)$ time when $q > \frac{11}{6} \Delta$, which in turn implied an $O(n^2)$ mixing time for the standard Glauber dynamics. More recently, the threshold has been further lowered to $q > 1.809\Delta$ through a sequence of advances~\cite{chen2019improved, carlson2025flip}.

Despite this progress, efficient local samplers have remained elusive for proper $q$-colorings, primarily due to the lack of unconditional marginal bounds, as discussed earlier. 
Previously, as noted in \cite[Section 8]{feng2023towards}, a major obstacle to designing a local sampler for $q$-colorings has been overcoming the threshold $q=\Omega(\Delta^2)$, which corresponds to Huber's bounding chains~\cite{huber1998exact}.

Our result overcomes this obstacle and provides the first local sampling algorithm for proper $q$-colorings in the near-critical regime $q = O(\Delta)$.

\begin{theorem}[local sampler for proper $q$-colorings]\label{theorem:coloring}
There exists an algorithm that, given a graph $G=(V,E)$ with maximum degree $\Delta\geq 1$,  
an integer $q$ satisfying
\begin{equation}\label{eq:coloring-condition}
   q\geq 65\Delta,
     \end{equation}
and a subset of vertices $\Lambda\subseteq V$, 
 outputs a perfect sample $X\sim \mu_{\Lambda}$, where $\mu$ denotes the uniform distribution over all proper $q$-colorings of $G$, in expected time $O\left(\Delta^2q\cdot |\Lambda|\right)$.  
\end{theorem}

\begin{remark}
%
If the local computation cost is relaxed to be sublinear in the size of the input graph,  
as in the local computation algorithm (LCA) model, 
a better bound of $q\geq 9\Delta$ was obtained in~\cite{amartya2020local}. 
    Specifically, given any subset of vertices $\Lambda\subseteq V$, their algorithm outputs an approximate sample $X$ from $\mu_{\Lambda}$ within $\varepsilon$ total variation distance, in time $\tilde{O}((|V|/\varepsilon)^{0.68} \Delta \abs{\Lambda})$.  
    Their approach is based on simulating distributed local Markov chains and is unlikely to yield a local sampler in the sense of \Cref{theorem:coloring}.
\end{remark}

\subsubsection{Local algorithms for probabilistic inference}

An important application of efficient local samplers lies in their connection to local counting for self-reducible problems, where they directly yield efficient algorithms for probabilistic inference. In the (Bayesian) probabilistic inference problem, the goal is typically to estimate how the marginal probability of a specific vertex changes under certain conditions or observations. This task is fundamental to many areas and is particularly well-motivated in machine learning and statistics, where inference plays a central role in prediction, decision-making, and learning~\cite{dagum1993approximate, dagum1997optimal}.

For a partial configuration $\sigma\in [q]^{\Lambda}$ over a subset of vertices $\Lambda\subset V$ with $\mu_{\Lambda}(\sigma)>0$, and a vertex $v\in V\setminus \Lambda$, the conditional marginal distribution $\mu^{\sigma}_v$ is defined as:
\[
\forall c\in [q],\quad \mu^{\sigma}_v(c)\triangleq \frac{\sum_{\tau\in[q]^V:\tau_\Lambda=\sigma,\tau_v=c}\mu(\tau)}{\mu_{\Lambda}(\sigma)}=\Pr[\tau\sim{\mu}]{\tau_v=c\mid \tau_\Lambda=\sigma}.
\]

Specifically, we obtain the following local algorithms for probabilistic inference in spin systems with soft constraints and for proper $q$-colorings.

\begin{theorem}[probabilistic inference in spin systems with soft constraints]\label{theorem:inference-permissive}
There exists an algorithm that, 
given access (as in \Cref{assumption:access-model}) to a $q$-spin system $\+S=(G,\bm{\lambda},\bm{A})$ with Gibbs distribution $\mu=\mu^{\+S}$ satisfying \Cref{cond:main}, 
and given a subset of vertices $\Lambda\subset V$, a partial configuration $\sigma\in [q]^{\Lambda}$ with $\mu_{\Lambda}(\sigma)>0$, a vertex $v\in V\setminus \Lambda$, and parameters $\varepsilon,\delta\in (0,1)$, 
outputs an estimate $\hat{\mu}^{\sigma}_v$ such that 
\[
\Pr{\forall c\in [q]: (1-\varepsilon)\mu^{\sigma}_v(c)\leq \hat{\mu}^{\sigma}_v(c)\leq (1+\varepsilon)\mu^{\sigma}_v(c)}\ge1-\delta,
\]
in expected time $O\left(\varepsilon^{-2}\delta^{-1}\Delta q^2\log q\cdot |\Lambda|\right)$. 
\end{theorem}

\begin{theorem}[probabilistic inference for proper $q$-colorings]\label{theorem:inference-coloring}
There exists an algorithm that, 
given a graph $G=(V,E)$ with maximum degree $\Delta$ and $q\ge 65\Delta$,
a partial proper $q$-coloring $\sigma\in [q]^{\Lambda}$ of a subset of vertices $\Lambda\subset V$, a vertex $v\in V\setminus \Lambda$, and parameters $\varepsilon,\delta\in (0,1)$, 
outputs an estimate $\hat{\mu}^{\sigma}_v$ such that 
\[
\Pr{\forall c\in [q]: (1-\varepsilon)\mu^{\sigma}_v(c)\leq \hat{\mu}^{\sigma}_v(c)\leq (1+\varepsilon)\mu^{\sigma}_v(c)}\ge 1-\delta,
\] 
where $\mu$ denotes the uniform distribution over all proper $q$-colorings of $G$,
in expected time $O\left(\varepsilon^{-2}\delta^{-1}\Delta^2 q^3|\Lambda|\right)$. 
\end{theorem}

\subsection{Technique overview}
Previous works on local samplers include \cite{AJ22} and \cite{feng2023towards}, both of which rely on unconditional marginal lower bounds, i.e., the local uniformity property.
The work of \cite{AJ22} introduced a novel local sampler called ``\emph{lazy depth-first search}'' ({a.k.a.}~the A-J algorithm). 
To sample the spin of a vertex according to its correct marginal distribution, the algorithm first draws a random spin according to the unconditional marginal lower bounds, and with the remaining probability, it recursively samples the spins of all neighboring vertices.
The algorithm in \cite{feng2023towards} takes a different approach, employing a backward deduction framework for Markov chains, referred to as ``\emph{coupling towards the past}'' (CTTP). 
Their method uses systematic Glauber dynamics combined with a grand coupling based on unconditional marginal lower bounds, 
allowing the spin of a vertex to be inferred via a convergent information-percolation process.
Despite their differences, both approaches rely crucially on unconditional marginal lower bounds (implied by local uniformity) to prevent excessive backtracking and thus ensure the efficiency of the sampling procedure.
For a more detailed comparison of the two algorithms, we refer the reader to \cite[Section 1.2]{feng2023towards}.

We introduce key innovations that eliminate the reliance on local uniformity for local sampling.
While the high-level ideas are broadly applicable, 
we present our new local samplers within the coupling towards the past (CTTP) framework for local Markov chains.
Unlike the original CTTP algorithm of \cite{feng2023towards}, which depends on a default grand coupling derived from unconditional marginal lower bounds,
our approach introduces several new ideas to adapt the grand coupling, enabling efficient local samplers without assuming unconditional marginal lower bounds.

To design local samplers for systems with soft constraints that lack local uniformity, we first generalize the CTTP framework via an abstract notion of  \emph{marginal sampling oracles}: 
procedures that sample from conditional marginal distributions given oracle access to the neighborhood configuration.
This abstraction allows each implementation of a marginal sampling oracle to correspond to a specific simulation of Glauber dynamics --- or more precisely, to a particular grand coupling of the chain.
We then implement the marginal sampling oracle via rejection sampling, 
which leverages the softness of local constraints rather than relying on unconditional lower bounds on marginal probabilities.
This yields efficient local samplers for spin systems with soft constraints beyond local uniformity.

The case of truly repulsive spin systems is much more challenging, as no marginal lower bound exists. 
Consequently, it is impossible to determine the outcome of an update at a given time with positive probability without additional information.
To address this, we further extend the CTTP framework to:
\begin{itemize}
  \item allow \emph{partial information} (rather than the full outcome) to be resolved at a given timestamp;
  \item allow the grand coupling strategy at timestamp $t$ to depend on earlier timestamps $t'<t$,
  introducing \emph{adaptivity} into the grand couplings.
\end{itemize}
Leveraging these new ideas, we obtain the first local sampler for $q$-colorings with $q=O(\Delta)$ colors.
We note that similar ideas have appeared in Coupling From The Past (CFTP), which yields (global)  perfect samplers for $q$-colorings with $q=O(\Delta)$ colors~\cite{huber1998exact,bhandari2020improved,jain2021perfectly}.
Our technical contributions regarding $q$-colorings can thus be viewed as local counterparts of these CFTP-based global samplers.

For the analyses of our local samplers, correctness follows from the validity of the underlying grand coupling in each construction.
For efficiency, we employ different approaches for spin systems with soft and hard constraints.
In the case of spin systems with soft constraints, the algorithm’s behavior is relatively straightforward: we demonstrate that it is stochastically dominated by a subcritical branching process, which directly implies its efficiency.
%
In contrast, the $q$-coloring case exhibits more intricate behavior, rendering the previous analysis inapplicable. 
To address this, we introduce a carefully designed \emph{potential function} that reflects the state of the algorithm and drops to zero upon termination.
We prove that this potential function evolves as a supermartingale with bounded differences throughout the execution of the algorithm, thereby establishing efficiency.

\subsection{Related topics}

Our local sampler is built upon the Coupling Towards The Past (CTTP) framework introduced in~\cite{feng2023towards}, which bears resemblance to the celebrated Coupling From The Past (CFTP) method by Propp and Wilson~\cite{propp1996exact} for perfect sampling from Markov chains, as both approaches utilize the idea of grand coupling. (See Section 1.4 of~\cite{feng2023towards} for a detailed comparison between the two frameworks.) Notably, the CTTP framework is more restrictive than CFTP: an efficient local sampler within the CTTP framework implies the existence of an efficient perfect sampler under CFTP, but the converse does not hold. This asymmetry arises because CTTP aims to produce not only a perfect sample but also a \emph{local} one, whereas existing CFTP constructions typically rely on global knowledge in the analysis~\cite{huber1998exact,bhandari2020improved,jain2021perfectly}.

The backward deduction of Markov chain states in the CTTP framework also bears resemblance to the analysis of the cutoff phenomenon via the method of \emph{information percolation}~\cite{Lubetzky2016information,Lubetzky2017universal}. In particular, \cite{Lubetzky2017universal} shows that Glauber dynamics for the ferromagnetic Ising model exhibits a cutoff phenomenon in the near-critical regime $\beta \leq 1 + \frac{1}{O(\Delta)}$. Despite these structural similarities, the goals of the two frameworks differ fundamentally: CTTP is designed for constructing local samplers, while the information percolation approach is aimed at analyzing mixing times. Furthermore, our technique for obtaining near-critical local samplers for the Ising model differs significantly from that of~\cite{Lubetzky2017universal}: our grand coupling at each time step is constructed using rejection sampling, whereas theirs is based on discrete Fourier expansion. Additionally, the bounds we obtain are tighter than those in~\cite{Lubetzky2017universal}.


Our local sampler also falls into the category of providing \emph{local access to large random objects}~\cite{amartya2020local,biswas2022local,morters2022sublinear}.
Given a $q$-spin system $\+S = (G = (V, E), \bm{\lambda}, \bm{A})$ and public random bits, our algorithm can generate consistent samples $X_{\Lambda}$ such that $X \sim \mu = \mu^{\+S}$ upon multiple queries of any subset of vertices $\Lambda \subseteq V$, using only a local number of probes for public random bits.


\subsection{Organization} 
The paper is organized as follows:
\begin{itemize}
    \item In \Cref{sec:prelim}, we introduce the necessary preliminaries.
    \item 
    In \Cref{sec:framework}, we present a generalized  CTTP framework, with an abstract notion of ``marginal sampling oracles'', and show how to utilize this abstraction to yield local samplers beyond local uniformity.
    \item 
    In \Cref{sec:local}, we design a new marginal sampling oracle and apply it to obtain our local sampler for spin systems with soft constraints, proving \Cref{theorem:local-sampler,theorem:inference-permissive}.
    \item 
    In \Cref{sec:coloring}, we further extend the CTTP framework to design a local sampler for $q$-colorings, proving \Cref{theorem:coloring,theorem:inference-coloring}.
    \item 
    In \Cref{sec:conclusions}, we summarize our contributions and outline potential future directions.
\end{itemize}


\section{Preliminaries}\label{sec:prelim}

\subsection{Markov chain basics}

Let $\Omega$ be a (finite) state space.
Let $(X_t)_{t = 1}^\infty$ be a Markov chain over the state space $\Omega$ with transition matrix $P$.
A distribution $\pi$ over $\Omega$ is a \emph{stationary distribution} of $P$ if $\pi = \pi P$.
The Markov chain $P$ is \emph{irreducible} if for any $x,y \in \Omega$, there exists a timestamp $t$ such that $P^t(x,y) > 0$.
The Markov chain $P$ is \emph{aperiodic} if for any $x \in \Omega$, $\gcd\{t\mid P^t(x,x) > 0\} = 1$.
If the Markov chain $P$ is both irreducible and aperiodic, then it has a unique stationary distribution.
The Markov chain $P$ is \emph{reversible} with respect to the distribution $\pi$ if the following \emph{detailed balance equation} holds.
\begin{align*}
	\forall x, y \in \Omega,\quad \pi(x) P(x,y) = \pi(y)P(y,x),
\end{align*}
which implies $\pi$ is a stationary distribution of $P$.
The \emph{mixing time} of the Markov chain $P$ is defined by
\begin{align*}
	\forall \epsilon > 0, \quad T(P,\epsilon) \defeq \max_{X_0 \in \Omega} \max\{t \mid \DTV{P^t(X_0,\cdot)}{\pi} \leq \epsilon\},
\end{align*}
where the \emph{total variation distance} is defined by
\begin{align*}
\DTV{P^t(X_0,\cdot)}{\pi} \defeq \frac{1}{2}\sum_{y \in \Omega}\abs{P^t(X_0,y)-\pi(y)}.	
\end{align*}

\subsection{Systematic scan Glauber dynamics}\label{sec:sys-scan}

The \emph{systematic scan} Glauber dynamics is a generic way to sample from Gibbs distributions defined by spin systems. Given a $q$-spin system $\+S=(G=(V,E),\bm{\lambda},\bm{A})$. Let $n=|V|$ and assume an arbitrary ordering $V=\{v_0,v_1,\dots,v_{n-1}\}$, and let $T>0$ be some finite integer, the $T$-step systematic scan Glauber dynamics $\+P{(T)}=\+P^{\+S}(T)$
\begin{enumerate}
    \item starts with an arbitrary configuration $X_{-T}\in [q]^V$ satisfying $\mu(X_{-T})>0$ at time $t=-T$;
    \item at each time $-T< t\leq  0$,
    \begin{enumerate}
        \item picks the vertex $v= v_{i(t)}$ where $i(t)\triangleq t\mod n$, let $X_t(u)=X_{t-1}(u)$ for every $u\in V\setminus \{v\}$;
        \item resample $X_t(v)$ from the marginal distribution $\mu^{X_{t-1}}_{v}$ on $v$ conditioning on $X_{t-1}$ where
        \[
        \forall c\in [q],\quad \mu^{X_{t-1}}_{v}(c)=\mu^{X_{t-1}(N(v))}_{v}(c)\propto \lambda_v(c)\prod\limits_{e=(u,v)\in E}A_{e}(\sigma(u),c).
        \]
        Here, the first equality is due to the \emph{conditional independence} property of Gibbs distributions.
   \end{enumerate}
\end{enumerate}

The systematic scan Glauber dynamics is not a time-homogeneous Markov chain. However, by bundling $n$ consecutive updates together, we can obtain a time-homogeneous Markov chain, which is aperiodic and reversible, which is sufficient for us to apply the following theorem.
\begin{theorem}[\cite{levin2017markov}]\label{thm-convergence}
Let $\mu$ be a distribution with support $\Omega \subseteq [q]^V$. Let $(X_t)_{t =0}^\infty$ denote the systematic scan Glauber dynamics on $\mu$. If $(X_t)_{t =0}^\infty$ is irreducible over $\Omega$, it holds that
\begin{align*}
	\forall X_0 \in \Omega,\quad \lim_{t \to \infty}\DTV{X_t}{\mu} = 0.
\end{align*}
\end{theorem}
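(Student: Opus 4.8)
The plan is to prove the convergence of the systematic scan Glauber dynamics to $\mu$ by reducing to the standard convergence theorem for aperiodic, irreducible, time-homogeneous Markov chains, which is already cited as \Cref{thm-convergence} from \cite{levin2017markov}. The subtlety is that the systematic scan is \emph{not} time-homogeneous: the update rule at time $t$ depends on $i(t) = t \bmod n$. So the first step is to bundle $n$ consecutive single-site updates into one ``super-step'': define $Y_k \defeq X_{kn}$ (or $X_{-T+kn}$) for $k \ge 0$, so that $(Y_k)_k$ is governed by a single transition matrix $Q \defeq P_{v_0} P_{v_1} \cdots P_{v_{n-1}}$, where $P_{v_i}$ is the single-site heat-bath update at $v_i$. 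This $Q$ is time-homogeneous. I would then verify the three hypotheses needed to invoke \Cref{thm-convergence} for $Q$: (i) $\mu$ is stationary for $Q$; (ii) $Q$ is irreducible on $\Omega = \mathrm{supp}(\mu)$; (iii) $Q$ is aperiodic.

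For stationarity, the key observation is that each single-site heat-bath update $P_{v_i}$ is reversible with respect to $\mu$ (it satisfies the detailed balance equation, since resampling $X(v_i)$ from the correct conditional marginal $\mu^{X}_{v_i}$ leaves $\mu$ invariant and in fact satisfies detailed balance at the single site). Hence each $P_{v_i}$ has $\mu$ as a stationary distribution, and a product of matrices each fixing $\mu$ also fixes $\mu$: $\mu Q = \mu P_{v_0} \cdots P_{v_{n-1}} = \mu$. (Note $Q$ itself need not be reversible, but that is irrelevant — we only need a unique stationary distribution, which follows from irreducibility and aperiodicity alone.) For aperiodicity: for any $x \in \Omega$, each $P_{v_i}(x, x) > 0$ because the heat-bath update keeps $X(v_i)$ unchanged with positive probability (the conditional marginal $\mu^x_{v_i}(x(v_i))$ is strictly positive since $\mu(x) > 0$ and $\Omega$ is the support), so $Q(x,x) \ge \prod_i P_{v_i}(x,x) > 0$, giving $\gcd\{t : Q^t(x,x) > 0\} = 1$.

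The main work — and the one genuinely model-dependent point — is irreducibility of $Q$ on $\Omega$. Here I would invoke the hypothesis of \Cref{thm-convergence} directly: the theorem is stated under the assumption that ``$(X_t)_{t=0}^\infty$ is irreducible over $\Omega$'', so I can assume the single-site scan chain is irreducible and transfer this to $Q$. Concretely, if for every $x, y \in \Omega$ there is a sequence of single-site heat-bath moves from $x$ to $y$ with positive probability, then by padding with idle moves (each $P_{v_i}(z,z) > 0$ on $\Omega$, as noted above) one can arrange the move sequence to have length a multiple of $n$ and to respect the cyclic scan order $v_0, v_1, \ldots, v_{n-1}, v_0, \ldots$; concatenating these full sweeps shows $Q^m(x,y) > 0$ for some $m$. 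Thus $Q$ is irreducible on $\Omega$.

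With (i)–(iii) established, \Cref{thm-convergence}'s underlying standard result gives $\lim_{k\to\infty} \DTV{Y_k}{\mu} = 0$ for every starting state $Y_0 = X_0 \in \Omega$. To finish, I would upgrade convergence along the subsequence $t = kn$ to convergence along all $t$: for $t = kn + r$ with $0 \le r < n$, write $X_t = X_{kn} P_{v_0} \cdots P_{v_{r-1}}$, and since each $P_{v_i}$ is a stochastic matrix (a contraction in total variation), $\DTV{X_t}{\mu} = \DTV{X_{kn} P_{v_0}\cdots P_{v_{r-1}}}{\mu P_{v_0} \cdots P_{v_{r-1}}} \le \DTV{X_{kn}}{\mu} \to 0$, using $\mu P_{v_i} = \mu$ and the data-processing inequality for total variation under a stochastic kernel. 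This yields $\lim_{t\to\infty}\DTV{X_t}{\mu} = 0$ as claimed. The only ``obstacle'' is bookkeeping the non-homogeneity correctly in the bundling step and being careful that all positivity claims are made on the support $\Omega$ rather than on all of $[q]^V$; everything else is routine.
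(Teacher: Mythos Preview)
The paper does not prove this theorem: it is stated as a citation from \cite{levin2017markov}, preceded only by the one-sentence remark that bundling $n$ consecutive updates yields a time-homogeneous chain that is ``aperiodic and reversible''. Your proposal is essentially a correct fleshing-out of that remark, and the approach (bundle into $Q = P_{v_0}\cdots P_{v_{n-1}}$, verify stationarity, aperiodicity via $Q(x,x)>0$, irreducibility by padding to a multiple of $n$, then pass from the subsequence $kn$ to all $t$ via the TV contraction of stochastic kernels) is the standard one and matches the paper's hint.

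One small point where you are actually more careful than the paper: the paper asserts the bundled chain is \emph{reversible}, but a product of $\mu$-reversible kernels is in general not reversible (only $\mu$-stationary). You correctly avoid relying on reversibility of $Q$ and instead argue directly that $\mu Q = \mu$ together with irreducibility and aperiodicity suffices for convergence; this is the right way to do it.
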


\section{Coupling towards the past without marginal lower bounds}\label{sec:framework}

Our local sampler is based on the Coupling Towards The Past (CTTP) framework recently introduced in \cite{feng2023towards}, 
which constructs a local sampler by evaluating multiple spin states from stationary Markov chains through backward deduction. 
Our framework generalizes the CTTP framework by replacing the default grand coupling, 
which uses unconditional marginal lower bounds, 
with grand couplings defined by arbitrary ``marginal sampling oracles''. 
This generalization allows us to design a specific marginal sampling oracle that leads to a local sampler beyond the regime of local uniformity.



\subsection{Marginal sampling oracles}
Before introducing the CTTP framework, we first define \emph{marginal sampling oracles}.
Due to the conditional independence property of Gibbs distributions, to sample from the (conditional) marginal distribution $\mu^{\sigma}_v$ for a vertex $v \in V$ and a configuration $\sigma \in [q]^{V \setminus {v}}$, it suffices to retrieve the spins of all neighbors, $\sigma(N(v))$.
A \emph{marginal sampling oracle} generalizes this concept by producing a marginal sample, given oracle access to the spin $\sigma(u)$ of each neighbor $u \in N(v)$.


\begin{definition}[marginal sampling oracle]\label{definition:locally-defined-grand-coupling}
Let $\mu$ be a distribution over $[q]^V$. For a variable $v \in V$, we define $\eval^\+O(v)$ as a procedure that
    makes oracle queries to $\+O(u)$, which consistently returns a value $c_u \in [q]$ for each $u \in N(v)$.

We say that $\eval^\+O(v)$ is a \emph{marginal sampling oracle} at $v$ (with respect to $\mu$) if: 
\begin{itemize}
    \item for each $\sigma \in [q]^{N(v)}$, assuming $\+O(u)$ consistently returns $\sigma(u)$ for each $u \in N(v)$, the output of $\eval^\+O(v)$ is distributed exactly as $\mu^{\sigma}_v$.
\end{itemize}
\end{definition}

Recall the definition of systematic scan Glauber dynamics $\+P(T)$ in \Cref{sec:sys-scan}.
Using a marginal sampling oracle, the systematic scan Glauber dynamics can be simulated as follows.

\begin{definition}[simulation of systematic scan Glauber dynamics via a marginal sampling oracle]\label{definition:systematic-scan-coupling-realization}
The systematic scan Glauber dynamics $\+P(T)$  with respect to $\mu$  is simulated as:
\begin{enumerate}
    \item start with an arbitrary configuration $X_{-T}\in [q]^V$ satisfying $\mu(X_{-T})>0$ at time $t=-T$;
    \item at each time $-T< t\leq  0$,
    \begin{enumerate}
        \item pick the vertex $v= v_{i(t)}$ where $i(t)\triangleq t\mod n$, let $X_t(u)=X_{t-1}(u)$ for every $u\in V\setminus \{v\}$;
        \item let $\eval^\+O(v)$ be a marginal sampling oracle (w.r.t $\mu$) at $v$ where the oracle accesses $\+O(u)$ are replaced with $X_{t-1}(u)$ for each $u\in N(v)$, update $X_t(v)\gets \eval^\+O(v)$.\label{item:update}
    \end{enumerate}
\end{enumerate}
\end{definition}

\begin{remark}[grand coupling]\label{remark:implicit-grand-coupling}
In \Cref{definition:systematic-scan-coupling-realization}, the only randomness involved is within the subroutine $\eval^{\+O}(v)$. 
Notably, for any implementation of a marginal sampling oracle, 
\Cref{definition:systematic-scan-coupling-realization} specifies a simulation of systematic scan Glauber dynamics,
and implicitly defines a \emph{grand coupling} that couples the Markov chain across all possible initial configurations. 
To see this, consider pre-sampling all random variables used within $\eval^\+O(v_{i(t)})$ for each timestamp $t$, thereby defining the grand coupling.
\end{remark}

\subsection{Simulating stationary Markov chains using backward deduction}

We present the CTTP framework for constructing the local sampler, 
which is a backward deduction of the forward simulation described in \Cref{definition:systematic-scan-coupling-realization} (or equivalently, the grand coupling constructed in \Cref{remark:implicit-grand-coupling}).

Consider the systematic scan Glauber dynamics running from the infinite past toward time $0$, which is the limiting process of $\+P(T)$ as $T\to \infty$, denoted by $\+P(\infty)$. 
By \Cref{thm-convergence}, when $\+P(T)$ is irreducible, the state $X_0$ of this process is distributed exactly according to $\mu$. Our local sampler is then constructed by resolving the outcome $X_0(\Lambda)$, where $\Lambda$ is the queried set of vertices. For any $t\leq 0$ and $u\in V$, define \begin{equation}\label{eq:definition-pred}
\pred_t(u)\defeq \max \{t'\mid t'\leq t, v_{i(t')}=u\}
\end{equation}
as the last time, up to $t$, that vertex $u$ was updated.   The local sampler is formally presented in \Cref{Alg:lsample}.

\begin{algorithm}[H]
\caption{$\lsample(\Lambda;M)$} \label{Alg:lsample}
\SetKwInput{KwData}{Global variables}
\KwIn{A $q$-spin system $\+S=(G=(V,E),\bm{\lambda},\bm{A})$, a subset of variables $\Lambda\subseteq V$.}
\KwOut{A random configuration $X \in [q]^{\Lambda}$.}
\KwData{A mapping $M: \mathbb{Z}\to [q]\cup\{\perp\}$.}
$X\gets \emptyset, M\gets \perp^{\mathbb{Z}}$\label{Line:local-sampler-initialization}\;
\ForAll{$v\in \Lambda$}{
    $X(v)\gets \resolve(\pred_0(v);M)$\;\label{Line:local-sampler-resolve}
}
\Return $X$\;
\end{algorithm}

\Cref{Line:local-sampler-resolve} of \Cref{Alg:lsample} utilizes a procedure $\resolve$, formally presented as \Cref{Alg:resolve}, 
which takes as input a timestamp $t\leq 0$, and determines the outcome of the update at time $t$ of $\+P(\infty)$.

\begin{algorithm}[H]
\caption{$\resolve(t; M)$} \label{Alg:resolve}
\SetKwInput{KwData}{Global variables}
\KwIn{A $q$-spin system $\+S=(G=(V,E),\bm{\lambda},\bm{A})$, a timestamp $t\leq 0$.}
\KwOut{A random value $x\in [q]$.}
\KwData{A mapping $M: \mathbb{Z}\to [q]\cup\{\perp\}$.}
\lIf(\tcp*[f]{check if the outcome is already resolved}){$M(t) \neq \perp$}{\Return $M(t)$\label{Line:resolve-finite-memoization}}
$M(t)\gets \eval^\+O(v_{i(t)})$, with $\+O(u)$ replaced by $\resolve(\pred_t(u);M)$ for each $u\in N(v_{i(t)})$\;\label{Line:resolve-sample}
\Return $M(t)$\; \label{Line:resolve-final-return}
\end{algorithm}

A global data structure $M$ is maintained within \Cref{Alg:lsample}, storing the resolved values $M(t)$ for updates at each time $t$.
It is initialized as $M=\perp^{\mathbb{Z}}$ in \Cref{Line:local-sampler-initialization}. 
This data structure $M$ is introduced to facilitate memoization:
the outcome of $\resolve(t)$ is evaluated only once, ensuring consistency across multiple calls for the same $t$. 
For simplicity, we omit explicit references to $M$ and write $\lsample(\Lambda)$ and $\resolve(t)$ instead of $\lsample(\Lambda;M)$ and $\resolve(t;M)$.

\Cref{Line:resolve-sample} of \Cref{Alg:resolve} invokes a procedure $\eval^{\+O}(v_{i(t)})$,
which is abstractly defined in \Cref{definition:locally-defined-grand-coupling} but is not yet fully implemented.
Recall that the purpose of $\eval^{\+O}(v_{i(t)})$ is to infer the value to which the vertex $v=v_{i(t)}$ is updated in $\+P(\infty)$ at time $t$, which is distributed as $\mu^{X_{t-1}(N(v))}_v$ given access to $X_{t-1}(N(v))$. 
However, since \Cref{Alg:resolve} implements a backward deduction (as opposed to a forward simulation)  of the chain, the neighborhood configuration $X_{t-1}(N(v))$ at time $t$ is not available directly. 
To address this, the algorithm recursively applies \Cref{Alg:resolve} to infer the last updated value of each neighbor $u\in N(v)$ before time $t$ (as specified in \Cref{Line:resolve-sample} of \Cref{Alg:resolve}).

Formally, the subroutine $\eval(t)$ must satisfy the following local correctness condition:

\begin{condition}[local correctness of $\eval^\+O(v)$]\label{condition:local-correctness}
For each $v\in V$, the procedure $\eval^{\+O}(v)$ is a marginal sampling oracle at $v$, satisfying the requirement of \Cref{definition:locally-defined-grand-coupling}.
\end{condition}

Recall that \Cref{Alg:resolve} is designed to resolve the outcome of $\+P(\infty)$ at time $0$. However, the limiting process $\+P(\infty)$ is well-defined only if $\+P(T)$ is irreducible.  Additionally, we note that \Cref{Alg:resolve} does not necessarily terminate. 
Nonetheless, we provide a sufficient condition that ensures both the irreducibility of $\+P(T)$ and the termination of \Cref{Alg:resolve}.

\begin{condition}[immediate termination of $\eval^{\+O}(v)$]\label{condition:immediate-termination}
   For each $v\in V$, let $\+E_v$ be the event that $\eval^{\+O}(v)$ terminates without making any calls to $\+O$. 
   Then, the following must hold:
        \[
        \Pr{\+E_v}>0.
        \]
\end{condition}


We now establish the correctness of \Cref{Alg:lsample}, assuming Conditions \ref{condition:local-correctness} and \ref{condition:immediate-termination}.
\begin{lemma}[conditional correctness of \Cref{Alg:lsample}]\label{lemma:lsample-correctness}
   Assume that Conditions \ref{condition:local-correctness} and \ref{condition:immediate-termination} hold for $\eval^{\+O}(v)$.
   Then, for any $\Lambda \subseteq V$, \Cref{Alg:lsample} terminates with probability 1 and returns a random value $X \in [q]^{\Lambda}$ distributed according to $\mu_{\Lambda}$ upon termination. 
\end{lemma}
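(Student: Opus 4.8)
The plan is to establish two things: first, that the limiting process $\+P(\infty)$ is well-defined (i.e.\ $\+P(T)$ is irreducible), and second, that the recursive calls in \Cref{Alg:resolve} terminate almost surely and return exactly the values of this limiting process restricted to $\Lambda$. Irreducibility of $\+P(T)$ is the easy part: \Cref{condition:immediate-termination} guarantees that for each $v$ there is positive probability that $\eval^\+O(v)$ returns some fixed spin $c$ without querying any neighbour, and by \Cref{condition:local-correctness} this forced value has positive probability under $\mu^{\sigma}_v$ for \emph{every} boundary condition $\sigma$; hence every spin value reachable in $\mu$ has positive conditional probability at every vertex, and one full systematic scan can move any configuration in $\mathrm{supp}(\mu)$ to any other, giving irreducibility. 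Combined with aperiodicity (after bundling $n$ steps) and \Cref{thm-convergence}, $X_0$ of $\+P(\infty)$ is distributed as $\mu$, and $X_0(\Lambda)\sim\mu_\Lambda$.

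Next I would argue termination. The key observation is that $\resolve(t)$ only recurses on $u\in N(v_{i(t)})$ when $\eval^\+O(v_{i(t)})$ actually makes a call to $\+O(u)$; by \Cref{condition:immediate-termination}, at each invocation this happens with probability bounded away from $1$ (after conditioning appropriately, using the memoization structure so each timestamp is resolved once). I would set up the recursion tree of calls to $\resolve$ rooted at $\pred_0(v)$ for $v\in\Lambda$ and show it is finite a.s.: going backward in time, each time a vertex is ``touched'' there is an independent (given the history) positive probability $p_v\ge p_{\min}>0$ that the recursion stops there, while the branching is bounded by $\Delta$. This is a subcritical-type argument — one shows the expected number of nodes at ``depth'' $d$ decays, or more carefully couples the backward exploration with a terminating process — so the tree is finite with probability $1$, hence \Cref{Alg:resolve} and therefore \Cref{Alg:lsample} terminate w.p.\ $1$. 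I expect this termination/finiteness argument to be the main obstacle, since one must be careful that the ``stop'' events at different timestamps are genuinely usable as independent-ish lower bounds despite the memoization coupling distinct recursive branches that hit the same timestamp; a clean way is to expose the randomness of $\eval^\+O(v_{i(t)})$ one timestamp at a time in order of decreasing $t$ and apply \Cref{condition:immediate-termination} to the first visit of each timestamp.

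Finally, conditioned on termination, I would prove correctness of the returned value by induction on the (now finite) recursion structure, or equivalently by a coupling argument: run the forward simulation of \Cref{definition:systematic-scan-coupling-realization} for $\+P(T)$ with $T$ large enough that the backward exploration from $\Lambda$ never reaches time $-T$ (possible a.s.\ since the exploration is finite), using the \emph{same} internal randomness of each $\eval^\+O(v_{i(t)})$ as the backward computation. Then $\resolve(t)$ returns exactly $X_t(v_{i(t)})$ of that forward run: \Cref{Line:resolve-sample} computes $\eval^\+O(v_{i(t)})$ with $\+O(u)$ supplied by $\resolve(\pred_t(u))$, which by the induction hypothesis equals $X_{t-1}(u)=X_{\pred_t(u)}(u)$, matching the forward update rule in \Cref{definition:systematic-scan-coupling-realization}. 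Since the forward run's law at time $0$ converges to $\mu$ as $T\to\infty$ and the output does not depend on $T$ once $T$ is large enough, the output of \Cref{Alg:lsample} is distributed as $\mu_\Lambda$. I would also note that $\eval^\+O$ being a marginal sampling oracle (\Cref{condition:local-correctness}) is exactly what makes the forward simulation a faithful realization of $\+P(T)$, so no additional distributional check is needed beyond invoking \Cref{definition:systematic-scan-coupling-realization} and \Cref{thm-convergence}.
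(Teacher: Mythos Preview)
Your overall architecture---irreducibility of $\+P(T)$, almost-sure termination of the backward recursion, and then a coupling with the finite-time forward chain $\+P(T)$---matches the paper's proof exactly. The irreducibility sketch and the correctness-by-coupling sketch are essentially the paper's Lemmas~\ref{lemma:lsample-correctness-irreducibility} and~\ref{lemma:lsample-correctness-distribution}, with the final triangle-inequality wrap-up as in the proof of Lemma~\ref{lemma:lsample-correctness}. Two small inaccuracies in the irreducibility paragraph: from Conditions~\ref{condition:local-correctness} and~\ref{condition:immediate-termination} you only get \emph{one} spin value $c_v$ per vertex with $\mu^\sigma_v(c_v)>0$ for all $\sigma$, not ``every spin value reachable in $\mu$''; and ``one full scan moves any configuration to any other'' is too strong---the paper routes through the fixed configuration $\tau=(c_v)_{v\in V}$ and then invokes reversibility for the reverse direction.

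The genuine gap is the termination step. You frame it as a ``subcritical-type'' branching argument (``the expected number of nodes at depth $d$ decays''), but under \Cref{condition:immediate-termination} alone this is false: the immediate-termination probability $p=\min_v\Pr{\+E_v}$ can be arbitrarily small relative to the branching factor~$\Delta$, so the backward recursion, viewed as a Galton--Watson tree, may well be supercritical. Your fallback (``more carefully couples the backward exploration with a terminating process'') is not a proof. The paper's argument is different and uses the systematic-scan structure in an essential way: since any window of $n$ consecutive timestamps contains exactly one update of each vertex, and any recursive call from time $t$ lands in $(t-n,t)$, the event $\+B_t=\bigcap_{t'=t-n+1}^{t}\+E_{t'}$ (all $n$ evaluations in the window terminate immediately) is a \emph{barrier} that no recursion can cross. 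These barrier events over disjoint windows are independent with probability at least $p^n>0$, so the probability that the recursion reaches depth $Ln$ is at most $(1-p^n)^L$, giving almost-sure termination and, simultaneously, the total-variation convergence $\DTV{\lsample(\Lambda)}{\lsample_T(\Lambda)}\to 0$. Your proposal does not contain this barrier observation, and without it the termination claim is unsupported.
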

\Cref{lemma:lsample-correctness} is proved later in \Cref{sec:appendix-correctness}.
 It guarantees the termination and correctness of \Cref{Alg:lsample}, without addressing its efficiency.
 Next, we provide a sufficient condition for the efficiency of \Cref{Alg:lsample}.

\begin{condition}[condition for fast termination of $\eval^{\+O}(v)$]\label{condition:fast-termination}
Let $\delta > 0$ be a parameter.
 For each $v\in V$, and each $\sigma\in [q]^{N(v)}$ such that $\+O(u)$ consistently returns $\sigma(u)$ for all $u\in N(v)$,
 let $\+{T}^{\sigma}_v$ denote the total number of calls to $\+O(u)$ over all $u\in N(v)$. 
 Then, the following holds:
    \[
    \E{\+{T}^{\sigma}_v}\leq 1-\delta.
    \]
\end{condition}


We conclude this subsection with the following lemma, which establishes the efficiency of our local sampler under the assumption of \Cref{condition:fast-termination}.
Notably, it provides an upper bound on the total number of recursive $\resolve$ calls, rather than simply counting the number of initial calls for each $t \leq 0$.

\begin{lemma}[conditional efficiency of \Cref{Alg:lsample}]\label{lemma:lsample-efficiency}
Assuming \Cref{condition:fast-termination} holds, the expected total number of calls to $\resolve(t)$  within $\lsample(\Lambda)$ is $O(|\Lambda|)$.
\end{lemma}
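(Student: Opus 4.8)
<br>

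The plan is to bound the expected total number of $\resolve$ calls by viewing the recursion tree of $\resolve$ as a branching-process-like object and charging calls against a supermartingale. The key observation is that each call to $\resolve(t)$, when not answered by memoization, invokes $\eval^{\+O}(v_{i(t)})$, which in turn generates $\+T^{\sigma}_{v_{i(t)}}$ recursive calls to $\resolve(\pred_t(u); M)$ for the relevant neighbors $u$; by \Cref{condition:fast-termination}, the expected number of such child calls is at most $1-\delta < 1$, regardless of the (already-resolved) neighborhood configuration $\sigma$ that the oracle sees. So the recursion is ``sub-critical'': in expectation each node of the recursion forest spawns strictly fewer than one child.

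First I would set up the forest structure precisely. For each $v \in \Lambda$, \Cref{Alg:lsample} issues a root call $\resolve(\pred_0(v); M)$; these $|\Lambda|$ root calls (some possibly coinciding or being short-circuited by $M$) are the roots. Every non-memoized $\resolve(t)$ call produces children $\resolve(\pred_t(u); M)$ for the $u \in N(v_{i(t)})$ that $\eval^{\+O}$ actually queries; I would make the children be exactly the calls counted by $\+T^{\sigma}_v$. Memoized calls (those hitting \Cref{Line:resolve-finite-memoization}) are leaves that do no further work. I then want to bound $\E{N}$, where $N$ is the total number of $\resolve$ calls, i.e. the total size of this forest. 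The cleanest route: reveal the recursion in a breadth-first (or any fixed) order, and let $Z_i$ be the number of ``active/unresolved'' calls after processing $i$ of them. We start with $Z_0 \le |\Lambda|$; processing one call either hits memoization (then it contributes $0$ new calls, so $Z$ decreases by $1$) or it runs $\eval^{\+O}$, contributing in expectation at most $1-\delta$ new calls while removing itself, so $\E{Z_{i+1} \mid \text{past}} \le Z_i - 1 + (1-\delta) = Z_i - \delta$. Thus $Z_i + \delta i$ is a supermartingale (stopped at $0$), and by optional stopping the expected number of steps until $Z$ hits $0$ — which is exactly $N$ — satisfies $\E{N} \le |\Lambda|/\delta = O(|\Lambda|)$.

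The main subtlety, and the step I expect to require the most care, is the conditioning: \Cref{condition:fast-termination} gives the bound $\E{\+T^{\sigma}_v} \le 1-\delta$ only when $\+O(u)$ \emph{consistently} returns a fixed $\sigma(u)$ for each neighbor. In the actual execution, the values returned by the recursive $\resolve(\pred_t(u); M)$ calls are themselves random, and some may already be fixed in $M$ from earlier calls, so I must argue that conditioned on the entire history (all previously resolved values, hence all of $\sigma(N(v))$ that is already determined, plus the fresh randomness structure), the number of children spawned by the current call still has conditional expectation $\le 1-\delta$. This works because $\eval^{\+O}(v)$ queries its oracle $\+O(u)$ one at a time and the bound in \Cref{condition:fast-termination} holds for every fixed $\sigma$; by averaging over whatever (possibly random, possibly history-dependent) values the neighbors ultimately take, the conditional expectation of $\+T^{\sigma}_v$ given the past is still at most $1-\delta$. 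I would phrase this via a filtration adapted to the order in which $\resolve$ calls are made and invoke the tower property. A second minor point is that memoization only ever \emph{removes} work, so ignoring it (treating every repeated $(t)$ as a fresh subtree) can only overcount; hence the $O(|\Lambda|)$ bound on the idealized forest upper-bounds the true count, and in fact one should note the forest could be infinite with probability zero — the supermartingale/optional-stopping argument simultaneously yields that $N < \infty$ almost surely and $\E{N} = O(|\Lambda|)$, consistent with \Cref{lemma:lsample-correctness}.

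To summarize the steps in order: (1) define the recursion forest rooted at the $\le |\Lambda|$ calls from \Cref{Alg:lsample}, with children given by the oracle's neighbor-queries; (2) fix a processing order and define the filtration generated by the resolved values and oracle randomness up to each step; (3) show $\E{\#\text{children of current call} \mid \+F_i} \le 1-\delta$ using \Cref{condition:fast-termination} together with the tower property over the (history-dependent) neighbor values — this is the crux; (4) deduce that $Z_i + \delta i$ is a nonnegative supermartingale where $Z_i$ counts outstanding calls, with $Z_0 \le |\Lambda|$; (5) apply the optional stopping theorem at the hitting time of $0$ (which equals $N$) to conclude $\E{N} \le |\Lambda|/\delta = O(|\Lambda|)$, and note memoization only decreases the true count.
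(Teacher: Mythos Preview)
Your argument is correct and lands on the same bound $\E{N}\le |\Lambda|/\delta$ that the paper obtains. The paper takes a closely related but slightly different route: instead of tracking a supermartingale along the execution, it first observes (as you also do) that memoization only reduces the call count, and then argues that the recursion rooted at each $\pred_0(v)$ is stochastically dominated by a multitype Galton--Watson process in which every node labeled $t$ performs an \emph{independent} run of $\eval^{\+O}(v_{i(t)})$; since \Cref{condition:fast-termination} bounds the expected offspring by $1-\delta$ uniformly in $\sigma$, the standard subcritical branching bound gives expected tree size at most $\delta^{-1}$ per root, hence $O(|\Lambda|)$ in total.

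The two arguments are essentially dual---your supermartingale with drift $-\delta$ is precisely how one proves the subcritical branching bound---but the paper's independent-copies device neatly sidesteps the filtration issue you correctly identify as the crux: once each node draws fresh randomness, the question of which $\sigma$ the oracle sees becomes irrelevant, and no tower-property reasoning is needed. Your approach, in exchange, is more self-contained (no black-box appeal to branching-process theory) and makes the $|\Lambda|/\delta$ constant explicit. One small technical caution on your side: to invoke optional stopping rigorously you should first stop at $N\wedge T$, obtain $\E{N\wedge T}\le |\Lambda|/\delta$, and then let $T\to\infty$ via monotone convergence, since a priori $N$ is not known to be integrable.
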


\begin{proof}
The introduction of the map $M$ in \Cref{Line:resolve-finite-memoization} of $\resolve(t)$ is for memoization and only reduces the number of recursive calls. As a result, the expected running time of $\lsample(\Lambda)$ can be upper-bounded by the sum of the expected running times of $\resolve(\pred_0(v))$ for each $v \in \Lambda$. It remains to show that the expected running time of $\resolve(\pred_0(v))$ is $O(1)$ for each $v \in V$.

  
As the mapping $M$ only reduces the number of recursive calls, the behavior of $\resolve(\pred_0(v))$ can be stochastically dominated by the following multitype Galton-Watson branching process: 
\begin{itemize} 
\item Start with a root node labeled with $\pred_0(v)$ at depth $0$. 
\item For each $i = 0, 1, \ldots$: for all current leaves labeled with some timestamp $t$ at depth $i$: 
\begin{itemize} 
    \item Perform an independent run of $\resolve(t)$, and for each timestamp $t' < t$ such that $\resolve(t)$ is directly recursively called, add a new node labeled with $t'$ as a child of $t$. 
\end{itemize} 
\end{itemize}
By \Cref{condition:fast-termination}, for any timestamp $t \leq 0$, the expected number of offspring of a node labeled $t$ is at most $1 - \delta$. Thus, applying the theory of branching processes, the expected number of nodes generated by this process is at most $\delta^{-1} = O(1)$. Therefore, the expected number of $\resolve(t)$ calls within $\lsample(\Lambda)$ is $O(|\Lambda|)$, completing the proof of the lemma. 
\end{proof}

\subsection{Conditional correctness of the local sampler}\label{sec:appendix-correctness}
We will prove \Cref{lemma:lsample-correctness}, which addresses the conditional correctness of the local sampler (\Cref{Alg:lsample}).
At a high level, the proof follows the same structure of the proof in~\cite{feng2023towards}.

First, we need to establish some basic components.
\begin{lemma}\label{lemma:lsample-correctness-irreducibility}
 Assume that both Conditions \ref{condition:local-correctness} and \ref{condition:immediate-termination} hold for $\eval^{\+O}(v)$, then $\+P(T)$ is irreducible.
\end{lemma}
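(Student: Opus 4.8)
\textbf{Proof proposal for \Cref{lemma:lsample-correctness-irreducibility}.}

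The plan is to show that from any starting configuration $X_{-T} \in \Omega = \{\sigma : \mu(\sigma) > 0\}$, the systematic scan chain $\+P(T)$ can reach any target configuration $\tau \in \Omega$ in a positive number of steps with positive probability. Since $\+P(T)$ updates vertices in the fixed cyclic order $v_0, v_1, \dots, v_{n-1}$, the natural window to consider is one full sweep of $n$ consecutive updates; I would argue that a single sweep suffices to move from any $X \in \Omega$ to any $\tau \in \Omega$ with positive probability, which immediately gives irreducibility (and in fact that $\+P(T)$ composed over $n$ steps is an irreducible, aperiodic chain on $\Omega$).

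The key steps, in order: First, fix a sweep starting at some time $t_0+1$ that updates $v_0, v_1, \dots, v_{n-1}$ in that order, and let the desired outcome of the $j$-th update be $\tau(v_j)$. Second, I would show that at the step updating $v_j$, conditioned on the current configuration $Y$ (which already agrees with $\tau$ on $v_0,\dots,v_{j-1}$ and agrees with $X$ on $v_{j+1},\dots,v_{n-1}$), the probability that $\eval^{\+O}(v_j)$ outputs $\tau(v_j)$ is strictly positive. This is where I invoke the hypotheses: by \Cref{condition:local-correctness}, $\eval^{\+O}(v_j)$ is a marginal sampling oracle, so conditioned on the neighborhood configuration $Y(N(v_j))$ its output is distributed exactly as $\mu^{Y(N(v_j))}_{v_j}$; thus it suffices to show $\mu^{Y(N(v_j))}_{v_j}(\tau(v_j)) > 0$. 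For this I need that the configuration $Y'$ obtained from $Y$ by setting $v_j \mapsto \tau(v_j)$ still has $\mu(Y') > 0$. This should follow because both $Y$ and the relevant "partially updated" configurations lie in $\Omega$: indeed $\mu^{Y(N(v_j))}_{v_j}(c) \propto \lambda_{v_j}(c)\prod_{e=(u,v_j)} A_e(Y(u),c)$, and this is positive for $c = \tau(v_j)$ provided no factor vanishes — which I can guarantee by first establishing a structural fact about $\Omega$ (see below). Third, multiplying these $n$ positive conditional probabilities along the sweep, the probability of reaching exactly $\tau$ in one sweep is positive, giving $\+P(T)^n(X,\tau) > 0$, hence irreducibility.

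The main obstacle is the structural point: I need to know that each intermediate configuration along the sweep stays in the support $\Omega$, i.e., that setting $v_j \mapsto \tau(v_j)$ does not zero out the weight. Under \Cref{cond:main} (the permissive regime) this is trivial because every $A_e(c_1,c_2) \geq \lb(\Delta,\delta) > 0$ and, after normalization, we may also assume $\lambda_v(c) > 0$ or handle zero external-field entries separately — so in fact $\Omega = [q]^V$ and every marginal is fully supported, making the argument immediate. However, since \Cref{lemma:lsample-correctness-irreducibility} is stated only under Conditions \ref{condition:local-correctness} and \ref{condition:immediate-termination} (not \Cref{cond:main}), I should instead extract what I need purely from \Cref{condition:immediate-termination}: the event $\+E_v$ that $\eval^{\+O}(v)$ terminates with no oracle calls has positive probability, and on this event the output does not depend on the neighborhood at all, so its value $c^*$ has $\mu^\sigma_v(c^*) > 0$ for \emph{every} $\sigma \in [q]^{N(v)}$ (by \Cref{condition:local-correctness} the output still has the correct marginal law, so $\mu^\sigma_v(c^*)>0$). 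This $c^*$ may not equal $\tau(v_j)$, so the clean fix is to route through a single fixed "reachable-from-everywhere" configuration: in one sweep, using the $\+E_{v_j}$ events, we reach the configuration $\rho$ with $\rho(v_j) = c^*_j$ from \emph{any} $X \in \Omega$ with positive probability; and a symmetric argument (or irreducibility of the genuine Glauber dynamics, \Cref{thm-convergence}'s hypothesis) shows $\Omega$ is reachable from $\rho$. Combining, every pair in $\Omega$ communicates, so $\+P(T)$ is irreducible. I would double-check the edge case where $\Omega$ is a strict subset of $[q]^V$ to make sure $\rho \in \Omega$ and that the "$\rho$ to $\tau$" direction genuinely holds — this bookkeeping is the only delicate part.
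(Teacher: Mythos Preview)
Your final approach is essentially the paper's: extract from Conditions~\ref{condition:local-correctness} and~\ref{condition:immediate-termination} a value $c_v$ at each $v$ with $\min_{\sigma}\mu^{\sigma}_v(c_v)>0$, route every state through the fixed configuration $\rho$ with $\rho(v)=c_v$, and conclude irreducibility. The forward direction (any $X\to\rho$ in one sweep) is exactly as the paper argues, and your observation that $\rho\in\Omega$ via the chain rule matches the paper.

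The one point to tighten is the $\rho\to\tau$ direction. Your parenthetical ``irreducibility of the genuine Glauber dynamics, \Cref{thm-convergence}'s hypothesis'' is circular, and ``a symmetric argument'' does not work as stated: the roles of $\rho$ and $\tau$ are not symmetric, since only the values $c_v$ enjoy the unconditional marginal lower bound. The paper closes this by invoking reversibility of the single-site Glauber updates: since each step of the path $\tau\to\rho$ (changing one coordinate to $c_v$) has positive probability and each single-site kernel satisfies detailed balance with respect to $\mu$, the reversed step also has positive probability, giving a path $\rho\to\tau$. You should replace the vague ``symmetric argument'' with this reversibility observation (and, if you want to be fully precise about the systematic scan order, note that at any scan step one can keep the current value with positive probability, so the reversed sequence of single-site moves can be embedded into enough sweeps).
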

\begin{proof}
    Recall that in \Cref{condition:immediate-termination}, for any $v\in V$ and $\sigma\in [q]^{N(v)}$, $\+E^{\sigma}_v$ denotes the event that $\eval^{\+O}(v)$ terminates without any calls to $\+O$, assuming that $\+O(u)$ consistently returns $\sigma(u)$ for each $u\in N(v)$. For any $v\in V$, let $c_v\in [q]$ be an arbitrary possible outcome of $\eval^{\+O}(v)$, conditioning on $\+E_v$ happens. Note that such $c_v$ always exists by \Cref{condition:immediate-termination}. Then combining with \Cref{condition:local-correctness}, we have 
\begin{equation}\label{eq:marginal-lower-bound}
   \min\limits_{\sigma \in [q]^{N(v)}}\mu^{\sigma}_v(c_v)>0. 
\end{equation}

Let $\tau\in [q]^V$ be the constant configuration where $\tau(v)=c_v$ for $t=\pred_0(v)$. By \eqref{eq:marginal-lower-bound} and the chain rule, we see that $\mu(\tau)>0$. Also following \eqref{eq:marginal-lower-bound}, any $\sigma\in [q]^V$ such that $\mu(\sigma)>0$ can reach $\tau$ through Glauber moves by changing some $\sigma(u)$ to $\tau(u)$ one at a time. Note that $\+P(T)$ is reversible, therefore, any $\sigma\in [q]^V$ such that $\mu(\sigma)>0$ can also be reached from $\tau$ and hence $\+P(T)$ is irreducible.
\end{proof}

For any finite $T>0$, we introduce the following finite-time version of \Cref{Alg:lsample}, presented as \Cref{Alg:lsample-finite}, which locally resolves the final state $X_0$ of $\+P(T)$. Note that the only difference between Algorithms \ref{Alg:lsample} and \ref{Alg:lsample-finite} is the different initialization of the map $M$.

\begin{algorithm}[H]
\caption{$\lsample_T(\Lambda)$} \label{Alg:lsample-finite}
\SetKwInput{KwData}{Global variables}
\KwIn{a $q$-spin system $\+S=(G=(V,E),\bm{\lambda},\bm{A})$, a subset of variables $\Lambda\subseteq V$}
\KwOut{A random configuration $X \in [q]^{\Lambda}$}
\KwData{a map $M: \mathbb{Z}\to [q]\cup\{\perp\}$}
$X\gets \emptyset, M(t)\gets X_{-T}(v_{i(t)}) \text{ for each } t\leq -T,  M(t)\gets \perp \text{ for each } t> -T$\;
\ForAll{$v\in \Lambda$}{
    $X(v)\gets \resolve(\pred_0(v))$\;
}
\Return $X$\;
\end{algorithm}
 We then have the following lemma.

\begin{lemma}\label{lemma:lsample-correctness-convergence}
     Assume that both Conditions \ref{condition:local-correctness} and \ref{condition:immediate-termination} hold for $\eval^{\+O}(v)$. Then,
     \begin{enumerate}
         \item $\lsample(\Lambda)$ terminates with probability $1$;\label{item:lsample-correctness-convergence-1}
         \item For any initial state $X_{-T}$, it holds that $\lim\limits_{T\to \infty}\DTV{\lsample(\Lambda)}{\lsample_T(\Lambda)}=0$.\label{item:lsample-correctness-convergence-2}
     \end{enumerate}
\end{lemma}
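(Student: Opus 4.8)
The plan is to derive both items from the single estimate $\Pr{A_T}\to 1$, where $A_T$ denotes the event that the recursion of $\lsample(\Lambda)$ never invokes $\resolve(t)$ for a timestamp $t\le -T$. As in \Cref{remark:implicit-grand-coupling}, I would first make all randomness explicit by pre-sampling, independently over timestamps $t\le 0$, the internal randomness $R_t$ consumed by the single execution of $\eval^{\+O}(v_{i(t)})$ that \Cref{Alg:resolve} performs at time $t$; under this coupling both $\lsample(\Lambda)$ and $\lsample_T(\Lambda)$ become deterministic functions of $(R_t)_{t\le 0}$ (and $\lsample_T(\Lambda)$ also of the fixed $X_{-T}$), differing only in that $\lsample_T(\Lambda)$ truncates the recursion at $-T$. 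Hence on $A_T$ the two executions coincide call for call, so they return the same value and $\lsample(\Lambda)$ in particular halts. Since $A_T$ is increasing in $T$ and $\{\lsample(\Lambda)\text{ halts}\}=\bigcup_{T\ge 1}A_T$, the estimate $\Pr{A_T}\to 1$ immediately gives item~\ref{item:lsample-correctness-convergence-1} (as $\Pr{\lsample(\Lambda)\text{ halts}}=\lim_T\Pr{A_T}=1$) and, via the coupling, $\DTV{\lsample(\Lambda)}{\lsample_T(\Lambda)}\le\Pr{\overline{A_T}}\to 0$, which is item~\ref{item:lsample-correctness-convergence-2}.

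To prove $\Pr{A_T}\to 1$ I would run a ``clean block'' argument. Set $p^{*}\defeq\min_{v\in V}\Pr{\+E_v}>0$, positive by \Cref{condition:immediate-termination} and $|V|<\infty$. For each $t$ let $\mathsf{Clean}_t$ be the event that $\eval^{\+O}(v_{i(t)})$ issues no oracle query; since such a run never consults the oracle, $\mathsf{Clean}_t$ is a function of $R_t$ alone, the events $(\mathsf{Clean}_t)_t$ are independent, and $\Pr{\mathsf{Clean}_t}\ge p^{*}$. For $k\ge 1$ put $W_k\defeq\{-kn,-kn+1,\dots,-(k-1)n-1\}$, a block of exactly $n$ consecutive timestamps (one per vertex of the scan), and $B_k\defeq\bigcap_{t\in W_k}\mathsf{Clean}_t$; the blocks are pairwise disjoint, so the $B_k$ are independent and $\Pr{B_k}\ge (p^{*})^{\,n}>0$. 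The geometric core is that a recursive step sends $t$ to $\pred_t(u)$ for some $u\in N(v_{i(t)})$, and since $G$ has no self-loop this $u$ was last updated at a time in $\{t-1,\dots,t-n+1\}$, so a step decreases the timestamp by at least $1$ and at most $n-1$; as $|W_k|=n$, no step can jump from a timestamp $\ge -(k-1)n$ (above $W_k$) to one $\le -kn-1$ (below $W_k$), so every chain of recursive calls descending below $W_k$ must land inside $W_k$ at some point. But on $B_k$ each $\resolve(t)$ with $t\in W_k$ issues no recursive call, so no such chain exists. Since every $v\in\Lambda$ has $\pred_0(v)\in\{-(n-1),\dots,0\}$, which lies strictly above $W_k$ for all $k\ge 2$, on $B_k$ the entire recursion of $\lsample(\Lambda)$ is confined to $[-kn+1,0]$; in particular it halts and $B_k\subseteq A_{kn}$ (memoization in \Cref{Alg:resolve} only removes recursive calls, so this confinement is inherited by the real execution). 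Using $A_{kn}\subseteq A_T$ for $kn\le T$ and independence of the blocks, $\Pr{\overline{A_T}}\le\prod_{k=2}^{\lfloor T/n\rfloor}\bigl(1-\Pr{B_k}\bigr)\le\bigl(1-(p^{*})^{\,n}\bigr)^{\lfloor T/n\rfloor-1}\to 0$ as $T\to\infty$ (alternatively, the second Borel--Cantelli lemma shows a.s.\ infinitely many $B_k$ occur). Plugging this into the reduction above proves both items, and in fact shows the convergence in item~\ref{item:lsample-correctness-convergence-2} is geometric in $T$.

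I expect the main difficulty to be making this geometric confinement rigorous: one must (i) verify that systematic scan bounds every backward jump $t\mapsto\pred_t(u)$ by $n-1$ (using that $v_{i(t)}$ has no self-loop), (ii) show that a single clean block of width exactly $n$ severs every recursion path that would cross it --- this is precisely where the block length being equal to $n$ is used --- and (iii) check that replacing the lazily memoized recursion of \Cref{Alg:resolve} by the pre-sampled grand coupling $(R_t)_t$ only shortens recursions, so confinement transfers. One also uses implicitly that $\eval^{\+O}(v)$ halts once its oracle answers are fixed, so that, being confined to finitely many timestamps, $\lsample(\Lambda)$ halts; note that \Cref{condition:local-correctness} is not needed for this particular lemma (it enters only in the distributional analysis of $\lsample_T$).
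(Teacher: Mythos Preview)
Your proposal is correct and follows essentially the same approach as the paper's own proof: both arguments pre-sample the per-timestamp randomness, define the ``clean'' events $\+E_t$ (your $\mathsf{Clean}_t$), show that a clean block of $n$ consecutive timestamps severs every descending recursion chain, and then use independence across disjoint blocks to obtain a geometric tail bound that simultaneously yields almost-sure termination and the coupling estimate for item~\ref{item:lsample-correctness-convergence-2}. Your organization (first reducing both items to $\Pr{\overline{A_T}}\to 0$) and your explicit verification that $\pred_t(u)\ge t-n+1$ make the geometric confinement step slightly more transparent than in the paper, and your closing remark that \Cref{condition:local-correctness} is not actually used here is accurate.
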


\begin{proof}
We start with proving \Cref{item:lsample-correctness-convergence-1}. It suffices to show the termination of $\resolve(t_0)$ for any $t_0\leq 0$. Recall the event $\+E_v$ in \Cref{condition:immediate-termination}. For each $t\leq 0$, we similarly let $\+E_t$ denote the event that $\eval^{\+O}(v_{i(t)})$ within $\resolve(t)$ terminates without making any calls to $\+O$. We also define the event:
\[
\+B_t: \+E_{t'} \text{ happens for all }t'\in [t-n+1,t].
\]
We claim that if $\+{B}_t$ happens for some $t\leq t_0$,  then no recursive calls to $\resolve(t')$ would be incurred for any $t'\leq t-n$ within $\lsample(\Lambda)$. For the sake of contradiction, assume that a maximum $t^*\leq t-n$ exists such that $\resolve(t^*)$ is called. As $t^*\leq t-n<t_0$ , $\resolve(t^*)$ must be recursively called directly within another instance of $\resolve(t')$ (through $\eval^{\+O}(v_{i(t')})$) such that $t^*<t'$. Note that by \Cref{Alg:resolve}, the fact that $\eval^{\+O}(v_{i(t)})$ only make recursive calls to $\resolve(\pred_t(u))$ for some $u\in N(v_{i(t)})$ and \eqref{eq:definition-pred} we also have $t^*>t'-n$. We then have two cases:
\begin{enumerate}
    \item $t'\leq t-n$, this contradicts the maximality assumption for $t^*$.
    \item Otherwise $t'>t-n$. By $t^*\leq t-n$ and $t'<t^*+n$ we have $t'\in [t-n+1,t]$. 
      Also, by the assumption that $\+{B}_t$ happens, we have $\+E_{t'}$ happens; therefore, $\resolve(t')$ would have directly terminated without incurring any recursive call. This also leads to a contradiction and thus proves the claim.
\end{enumerate}

Let $p\defeq \min\limits_{t}\Pr{\+E_t}$, then $p>0$ by \Cref{condition:immediate-termination}. Note that by \Cref{condition:immediate-termination}, for any $t\leq t_0$, we have 
\[
\Pr{\+{B}_t}=\prod\limits_{t'=t-n+1}^{t}\Pr{\+E_{t'}}\geq (1-p)^n>0,
\]
where the first equality is by $\+E_{t}$ only depends on the randomness of procedure $\eval$, therefore all $\+E_{t'}$ are independent.

For any $L>0$, let $\+E_L$ be the event that there is a recursive call to $\resolve(t^*)$ where $t^* \le t_0-L n$.
By the claim above,
\begin{align*}
  \Pr{\+E_L} \le \Pr{\bigwedge\limits_{j=0}^{L-1}\tp{\neg \+{B}_{t_0-jn}}} = \prod\limits_{j=0}^{L-1}\Pr{\neg \+{B}_{t_0-jn}}\leq (1-p)^L,
\end{align*}
where the equality is again due to independence of $(\+E_t)_{t\leq t_0}$.
Consequently, with probability $1$, there is only a finite number of recursive calls, meaning that $\lsample(\Lambda)$ terminates with probability $1$. This establishes \Cref{item:lsample-correctness-convergence-1}.

For any $t\le 0$, since $\resolve(t)$ terminates with probability $1$,
its output distribution is well-defined. Therefore, the output distribution of $\lsample(\Lambda)$ is well-defined. For any $\eps>0$, we choose a sufficiently large $L$ such that $(1-p)^L\le \eps$.
For any $T\ge Ln-t_0$, we couple $\lsample(\Lambda)$ with $\lsample_T(\Lambda)$ by pre-sampling all random variables used in $\eval^{\+O}(v_{i(t)})$ within $\resolve(t)$ for each $t\leq 0$. Here by \Cref{condition:local-correctness}, the coupling fails if and only if $\resolve(t')$ is recursively called within $\lsample(\Lambda)$ for some $t'\leq -T$, that is, $\+E_{L}$ happens. 
By the coupling lemma, we have
\[
\DTV{\lsample(\Lambda)}{\lsample_T(\Lambda)}\leq \Pr{\+E_{L}}\leq (1-p)^L\le \eps,
\]
which proves \Cref{item:lsample-correctness-convergence-2} as we take $T\to \infty$.
\end{proof}

For any finite $T>0$ and $-T\leq t\leq 0$, we let $X_{T,t}$ be the state of $X_t$ in $\+P(T)$. The following lemma shows $\lsample_T$ indeed simulates $\+P(T)$.

\begin{lemma}\label{lemma:lsample-correctness-distribution}
  Assume that Conditions \ref{condition:local-correctness} and \ref{condition:immediate-termination} hold for $\eval^{\+O}(v)$. Then for any $\Lambda\subseteq V$, the value returned by $\lsample_T(\Lambda)$ is identically distributed as $X_{T,0}(\Lambda)$.
\end{lemma}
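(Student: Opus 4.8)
The plan is to realize $\lsample_T(\Lambda)$ as a \emph{lazy backward evaluation} of the very forward process $\+P(T)$ from \Cref{definition:systematic-scan-coupling-realization}, and to couple the two by sharing randomness. Concretely, observe that inside $\lsample_T(\Lambda)$ the memoization map $M$ guarantees that for each $t$ with $-T<t\le 0$ the subroutine $\eval^{\+O}(v_{i(t)})$ is invoked at most once, and by \Cref{condition:local-correctness} each such invocation consumes only its own internal random choices together with the values returned by the oracle calls it makes. I would therefore pre-sample, for every $t\in(-T,0]$, a seed $\+R_t$ encoding all internal randomness $\eval^{\+O}(v_{i(t)})$ could ever use, feed $\+R_t$ to the $t$-th update of $\+P(T)$ and to the (unique) invocation of $\eval^{\+O}(v_{i(t)})$ inside $\resolve(t)$, and use the same starting configuration $X_{-T}$ for both. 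Under this coupling the value returned by $\resolve(t)$ is a deterministic function $R(t)$ of $\{\+R_{t'}\}_{t'}$; and $\lsample_T(\Lambda)$ terminates deterministically, since any recursive call $\resolve(\pred_t(u))$ issued from $\resolve(t)$ has $\pred_t(u)<t$ (because $u\in N(v_{i(t)})$ forces $v_{i(\pred_t(u))}=u\neq v_{i(t)}$, hence $\pred_t(u)\neq t$), and once the timestamp reaches $\le -T$ the call returns immediately from $M$.

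The heart of the argument is the claim that $R(t)=X_{T,t}(v_{i(t)})$ for every $t$ with $-T<t\le 0$, which I would prove by induction on $t$ from $-T+1$ up to $0$. For the inductive step, fix $t$ and a neighbor $u\in N(v_{i(t)})$. By the definition~\eqref{eq:definition-pred}, $\pred_t(u)$ is the last time $\le t-1$ at which $u$ is picked, so $u$ is not resampled at any time in $(\pred_t(u),t-1]$; hence in the forward chain $X_{T,t-1}(u)=X_{T,\pred_t(u)}(u)$ if $\pred_t(u)>-T$, and $X_{T,t-1}(u)=X_{-T}(u)$ if $\pred_t(u)\le -T$. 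In the first case the induction hypothesis (valid since $\pred_t(u)<t$) gives $R(\pred_t(u))=X_{T,\pred_t(u)}(u)=X_{T,t-1}(u)$; in the second case $\resolve(\pred_t(u))$ returns $M(\pred_t(u))=X_{-T}(v_{i(\pred_t(u))})=X_{-T}(u)=X_{T,t-1}(u)$ straight from the initialization of $M$ in \Cref{Alg:lsample-finite}. Consequently, inside $\resolve(t)$ every oracle access $\+O(u)$ returns exactly $X_{T,t-1}(u)$, so $\eval^{\+O}(v_{i(t)})$ is run against the configuration $X_{T,t-1}(N(v_{i(t)}))$ with the same seed $\+R_t$ as the $t$-th forward update, and therefore outputs the same value, namely $X_{T,t}(v_{i(t)})$. (No separate base case is needed: for $t=-T+1$ every $\pred_t(u)\le -T$, so only the second case arises.)

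Finally, $\lsample_T(\Lambda)$ outputs $X(v)=\resolve(\pred_0(v))$ for each $v\in\Lambda$; if $\pred_0(v)>-T$ the claim and the fact that $v$ is not picked in $(\pred_0(v),0]$ give $R(\pred_0(v))=X_{T,\pred_0(v)}(v)=X_{T,0}(v)$, and if $\pred_0(v)\le -T$ then $R(\pred_0(v))=X_{-T}(v)=X_{T,0}(v)$ likewise, so the coupled outputs coincide with $X_{T,0}(\Lambda)$. Since $X_{T,0}$ has, by construction, exactly the law of the state of $\+P(T)$ at time $0$, this shows the two are identically distributed. I expect the only delicate point to be the predecessor-time bookkeeping near the boundary $t=-T$ — making precise that $\pred_t$ correctly reconstructs the neighborhood at time $t-1$ and that calls with timestamp $\le -T$ read off the initial configuration — together with the observation that memoization is exactly what lets a single seed $\+R_t$ per timestamp be shared consistently between the forward and backward processes.
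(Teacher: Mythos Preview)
Your proposal is correct and follows essentially the same route as the paper: couple the forward chain $\+P(T)$ and $\lsample_T(\Lambda)$ by sharing the internal randomness of $\eval^{\+O}(v_{i(t)})$ at each timestamp, then prove by induction on $t$ from $-T$ upward that $\resolve(t)$ returns $X_{T,t}(v_{i(t)})$, splitting into the cases $\pred_t(u)\le -T$ (read from the initialization of $M$) and $\pred_t(u)>-T$ (induction hypothesis). You are in fact somewhat more explicit than the paper about why $\pred_t(u)<t$, why memoization guarantees a single seed per timestamp suffices, and why the final output matches $X_{T,0}(\Lambda)$ even when $\pred_0(v)\le -T$.
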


\begin{proof}
We maximally couple the value returned by each $\resolve(t)$ and $X_t(v_{i(t)})$ in $\+P(T)$ for each $t\leq T$ 
and claim that in this case, the value returned by each $\resolve(t)$ is exactly the same as $X_t(v_{i(t)})$ in $\+P(T)$; hence the lemma holds by the definition of $\lsample_T$ and \eqref{eq:definition-pred}.

We prove the claim by induction from time $-T$ to $0$. For each $-T\leq t<0$, let $v=v_{i(t)}$ and consider the value returned by $\resolve(\pred_t(u))$ for each $u\in N(v)$:
\begin{itemize}
    \item If $\pred_t(u)<-T$, then by \eqref{eq:definition-pred}, the value of $u$ is not updated up to time $t$ in $\+P(T)$, hence $X_t(u)=X_{-T}(u)$ and the value returned by $\resolve(\pred_t(u))$ is $X_{-T}(u)=X_t(u)$ by the initialization of $M$ in \Cref{Alg:lsample-finite}.
    \item Otherwise, $-T\leq \pred_t(u)<t$ by \eqref{eq:definition-pred} and $u\in N(v)$, so the value returned by $\resolve(\pred_t(u))$ is $X_{\pred_t(u)}=X_t(u)$ by the induction hypothesis. Hence by \Cref{Line:resolve-sample} of \Cref{Alg:resolve} and \Cref{condition:local-correctness}, both the distribution of $\resolve(t)$ and $X_t(v_{i(t)})$ is $\mu^{X_{t-1}(N(v))}(v)$ and hence can be perfectly coupled.
\end{itemize}
Hence, the claim holds, and the lemma is proved.
\end{proof}

We are now ready to prove \Cref{lemma:lsample-correctness}.

\begin{proof}[Proof of \Cref{lemma:lsample-correctness}]
By \Cref{item:lsample-correctness-convergence-1} of \Cref{lemma:lsample-correctness-convergence}, we have $\lsample(\Lambda)$ terminates with probability $1$ and its output distribution is well-defined. It remains to prove that the output distribution of $\lsample(\Lambda)$ is exactly $\mu_{\Lambda}$.

By \Cref{lemma:lsample-correctness-irreducibility}, 
we have that $\+P(T)$ is irreducible. Then, \Cref{thm-convergence} implies that
\begin{equation}\label{eq:dtv-convergence-irreducibility}
\lim_{T\to \infty}\DTV{\mu_{\Lambda}}{X_{T,0}(\Lambda)}=0,
\end{equation}
For any $T\geq 0$, by the triangle inequality, we have
\begin{align}\label{eq:triangle-inequality}
   \DTV{\mu_{\Lambda}}{\lsample(\Lambda)}
   \leq  &\DTV{\mu_{\Lambda}}{\lsample_T(\Lambda)}\\
   &+\DTV{\lsample(\Lambda)}{\lsample_T(\Lambda)}.\notag
\end{align}
%
Altogether, the theorem follows from
    \begin{align*}
  &&&\DTV{\mu_{\Lambda}}{\lsample(\Lambda)} \\
  \text{(by \eqref{eq:triangle-inequality})}
  && \le~&~\limsup_{T\rightarrow\infty}\DTV{\mu_{\Lambda}}{\lsample_T(\Lambda)} \\
  &&~&~+ \limsup_{T\rightarrow\infty}\DTV{\lsample_T(\Lambda)}{\lsample(\Lambda)}  \\
 \text{(by \Cref{lemma:lsample-correctness-convergence})} 
 &&=~&~\limsup_{T\rightarrow\infty}\DTV{\mu_{\Lambda}}{\lsample_T(\Lambda)}  \\
 \text{(by \eqref{eq:triangle-inequality})} 
 &&\le~&~\limsup_{T\rightarrow\infty}\DTV{\mu_{\Lambda}}{X_{T,0}(\Lambda)} + \limsup_{T\rightarrow\infty}\DTV{X_{T,0}(\Lambda)}{\lsample_T(\Lambda)}   \\
  \text{(by \Cref{lemma:lsample-correctness-distribution})} 
  && =~&~\limsup_{T\rightarrow\infty}\DTV{\mu_{\Lambda}}{X_{T,0}(\Lambda)}   \\
  \text{(by \eqref{eq:dtv-convergence-irreducibility})} 
  && =~&~0.  \qedhere
\end{align*}
\end{proof}

\section{Application: a local sampler for spin systems with soft constraints}\label{sec:local}

In this section, we construct a new marginal sampling oracle for $q$-spin systems with soft constraints. We will use this oracle to build our local sampler and prove Theorems \ref{theorem:local-sampler} and \ref{theorem:inference-permissive}. Our construction is inspired by a simple rejection sampling procedure for sampling from $\mu^{\sigma}_v$, given the neighborhood configuration $\sigma \in [q]^{N(v)}$ for some $v \in V$. 

This rejection sampling procedure is described as follows:
    \begin{enumerate}
        \item Propose a random value $c\in [q]$ distributed according to $\lambda_v$;\label{item:propose}
        \item With probability $\prod\limits_{e=(u,v)\in E}A_e(\sigma(u),c)$, accept the proposal and return $c$ as the final outcome; 
        Otherwise, reject the proposal and go to Step~(\ref{item:propose}). 
    \end{enumerate}


Note that the well-definedness of the above procedure follows from \Cref{cond:main}, which ensures that all $\lambda_v$ and $A_e$ are normalized. Given a $q$-spin system $\+S = (G = (V, E), \bm{\lambda}, \bm{A})$, for any vertex $v \in V$, we can then define a marginal sampling oracle at $v$ based on the rejection sampling procedure.

\begin{algorithm}
\caption{a marginal sampling oracle for $q$-spin systems with soft constraints} \label{Alg:coupler-spin-system}
\SetKwInput{KwData}{Oracle access}
\KwIn{A $q$-spin system $\+S=(G=(V,E),\bm{\lambda},\bm{A})$, a vertex $v\in V$.}
\KwOut{A value $X \in [q]$.}
\KwData{$\+O(u)$ for each $u\in N(v)$.}
Sample an infinite long sequence of i.i.d.~tuples $\{(c_i,(r_{i,u})_{u\in N(v)})\}_{1\leq i<\infty}$ where
    each $c_i\in [q]$ is distributed as $\lambda_v$ and each $r_{i,u}$ is chosen uniformly from $[0,1]$\;
\label{line:interpret}
$i^* \gets \min \{i \mid \forall e=(u,v)\in E, r_{i,u}<A_e(\+O(u), c_i)\}$\;
\label{line:coupler-neighbors}
\Return $c_{i^*}$\;
\end{algorithm}

We remark that in \Cref{line:coupler-neighbors} of \Cref{Alg:coupler-spin-system}, such an $i^*$ always exists because $A_e$ satisfies \Cref{cond:main}. We now present the following lemma.

\begin{lemma}\label{lemma:coupler-correctness}
Suppose that the input $q$-spin system $\+S$ satisfies \Cref{cond:main}. Then, \Cref{Alg:coupler-spin-system} implements a marginal sampling oracle at $v$.
\end{lemma}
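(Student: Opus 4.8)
The plan is to show two things about \Cref{Alg:coupler-spin-system}: first, that the index $i^*$ is well-defined (almost surely finite), and second, that conditioned on $\+O(u) = \sigma(u)$ for all $u \in N(v)$, the returned value $c_{i^*}$ has law exactly $\mu^{\sigma}_v$. The first point is essentially already remarked in the text: since each $A_e$ is normalized so that $\max_{i,j} A_e(i,j) = 1$ and (by \Cref{cond:main}) every entry is at least $\lb(\Delta,\delta) > 0$, for any fixed $\sigma(N(v))$ and any proposed $c_i$ the acceptance event $\{\forall e=(u,v)\in E,\ r_{i,u} < A_e(\sigma(u),c_i)\}$ has probability $\prod_{e=(u,v)\in E} A_e(\sigma(u),c_i) \geq \lb(\Delta,\delta)^{\deg(v)} > 0$, bounded below uniformly over $i$; hence $i^*$ is finite almost surely by independence across $i$ (a geometric-type argument).

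Next I would verify correctness of the output distribution. The key observation is that, conditioned on the neighborhood oracle returning the fixed configuration $\sigma$, the tuples $\{(c_i, (r_{i,u})_{u \in N(v)})\}_{i \geq 1}$ are i.i.d., and for each $i$ the pair (proposal $c_i$, accept/reject indicator) is an independent copy of: draw $c \sim \lambda_v$, then accept with probability $p_\sigma(c) := \prod_{e=(u,v)\in E} A_e(\sigma(u), c)$ (this is exactly $\Pr[r_{i,u} < A_e(\sigma(u),c_i) \text{ for all } e \mid c_i = c]$, using that the $r_{i,u}$ are independent uniforms and each $A_e(\sigma(u),c) \in [0,1]$). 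Then $i^*$ is the first accepted index, and $c_{i^*}$ is the proposal at that index. This is precisely the textbook rejection sampler, so by the standard rejection-sampling identity,
\[
\Pr[c_{i^*} = c] = \frac{\lambda_v(c)\, p_\sigma(c)}{\sum_{c' \in [q]} \lambda_v(c')\, p_\sigma(c')} = \frac{\lambda_v(c) \prod_{e=(u,v)\in E} A_e(\sigma(u),c)}{\sum_{c'\in[q]} \lambda_v(c') \prod_{e=(u,v)\in E} A_e(\sigma(u),c')}.
\]
By the definition of $\mu^{\sigma}_v$ in \Cref{sec:sys-scan}, the right-hand side equals $\mu^{\sigma}_v(c)$, so $c_{i^*} \sim \mu^{\sigma}_v$. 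I would prove the rejection-sampling identity from scratch in one line: $\Pr[c_{i^*}=c] = \sum_{k\geq 1} \Pr[\text{first } k-1 \text{ proposals rejected}]\cdot\Pr[c_k = c \text{ and accepted}] = \sum_{k \geq 1}(1-Z_\sigma)^{k-1}\lambda_v(c)p_\sigma(c) = \lambda_v(c)p_\sigma(c)/Z_\sigma$ where $Z_\sigma = \sum_{c'}\lambda_v(c')p_\sigma(c')$.

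Finally I would note that the oracle queries to $\+O(u)$ are made only in \Cref{line:coupler-neighbors} and are consistent across $i$, matching the requirement in \Cref{definition:locally-defined-grand-coupling} that $\eval^{\+O}(v)$ make oracle queries returning a consistent value $c_u$ for each $u \in N(v)$; formally one restricts attention to the event that $\+O(u)$ consistently returns $\sigma(u)$, which is exactly the conditioning used above. I do not expect a serious obstacle here — the only subtlety is being careful that the randomness $r_{i,u}$ is independent of the proposals and of the oracle values, so that the per-round acceptance probability factors as claimed; this is immediate from the i.i.d.\ sampling in \Cref{line:interpret}. The well-definedness (finiteness of $i^*$) is the one place to be slightly careful, but \Cref{cond:main} gives the needed uniform positive lower bound on acceptance probabilities, so the whole argument is short.
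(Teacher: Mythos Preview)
Your proposal is correct and follows essentially the same approach as the paper: both verify that \Cref{Alg:coupler-spin-system} is the standard rejection sampler and compute the law of the accepted proposal to be $\mu_v^\sigma$. The paper's proof is marginally terser (it conditions on the acceptance event $\+D_i$ for a single round and reads off $\Pr[c_i=x\mid\+D_i]=\mu_v^\sigma(x)$ rather than summing the geometric series), and it does not spell out the finiteness of $i^*$ as carefully as you do, but the substance is identical.
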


\begin{proof}
Given a $q$-spin system $\+S=(G=(V,E),\bm{\lambda},\bm{A})$, in \Cref{Alg:coupler-spin-system}, for each $i\geq 1$ and each $u\in N(v)$, recall that each $c_i$ is chosen distributed as $p\in \Delta_q$ where $p(x) \propto \lambda_v(x)$ and each $r_{i,u}$ is independently chosen uniformly from $[0,1]$. Let $\+{D}_i$ be the event that 
\[
\+D_i: \forall e=(u,v)\in E, r_{i,u}<A_e(\+O(u), c_i),
\]
then for any $x\in [q]$, note that $\+O(u_j) = \sigma(u_j)$ under assumption, we have
\begin{align*}
    \Pr{c_i=x \mid \+{D}_i} 
    &= \frac{\Pr{c_i=x \wedge \+{D}_i}}{\Pr{\+{D}_i}} \\
    &= \frac{\lambda_v(x)\prod\limits_{e=(u,v)\in E}A_e(\sigma(u),x)}{\sum\limits_{c\in [q]}\left(\lambda_v(c)\prod\limits_{e=(u,v)\in E}A_e(\sigma(u),c)\right)} = \mu_v^\sigma(x).
\end{align*}
Let $i^*$ be the smallest index chosen in \Cref{line:coupler-neighbors} of \Cref{Alg:coupler-spin-system}, i.e., 
$i^* = \min \{i \mid \+{D}_i\}$.
the output of \Cref{Alg:coupler-spin-system} follows the distribution of $c_{i^*}$ conditioning on $\+{D}_{i^*}$, concluding the proof of the lemma.
\end{proof}

The marginal sampling oracle in \Cref{Alg:coupler-spin-system} as originally designed would require a significant number of oracle calls in \Cref{line:coupler-neighbors}, potentially violating the efficiency condition outlined in \Cref{condition:fast-termination}. 
The key optimization is to invoke the oracle $\+O(u)$ only when necessary for each neighbor $u \in N(v)$, rather than for every iteration in \Cref{Alg:coupler-spin-system}.
Formally, assuming \Cref{cond:main} holds, in \Cref{line:coupler-neighbors} of \Cref{Alg:coupler-spin-system}, if $r_{i,u} < \lb$, where $\lb = C(\Delta, \delta) \defeq 1 - \frac{1-\delta}{2\Delta}$, then the inequality $r_{i,u}<A_e(\+O(u),c_i)$ will hold true regardless of the value of $\+O(u)$. This is because the term $\lb$ is chosen such that $r_{i,u}$ is sufficiently small to ensure success in the comparison without needing the actual value of $\+O(u)$. Consequently, it becomes unnecessary to call $\+O(u)$ when $r_{i,u} < \lb$.
With this idea of optimization, we propose the following implementation of $\eval^{\+O}(v)$, presented in \Cref{Alg:eval}, which builds upon the above idea to efficiently sample without violating the fast termination condition.




\begin{algorithm}
\caption{$\eval^{\+O}(v)$ } \label{Alg:eval}
\SetKwInput{KwData}{Global variables}
\KwIn{A $q$-spin systems $\+S=(G=(V,E),\bm{\lambda},\bm{A})$, a vertex $v\in V$.}
\KwOut{A value $c \in [q]$.}
 Sample an infinite long sequence of i.i.d.~tuples $\{(c_i,(r_{i,u})_{u\in N(v)})\}_{1\leq i<\infty}$, where
    each $c_i\in [q]$ is distributed as $\lambda_v$ and each $r_{i,u}$ is chosen uniformly from $[0,1]$\;\label{Line:eval-sample}
\For{$i=1,2,...$\label{Line:reject}}{
$\textit{flag} \gets 1$\;
\For{$e=(u,v)\in E$\label{Line:eval-for}}{
\If{$r_{i,u}\geq \lb$\label{Line:eval-cond}}{
\lIf{$r_{i,u}\geq A_{e}(\+O(u),c_i)$\label{Line:eval-if}}{$\textit{flag} \gets 0$}
}
}
\lIf{$\textit{flag}=1$\label{Line:eval-return}}{\Return $c_i$}
}
\end{algorithm}

\begin{remark}[principle of deferred decision]\label{remark:lazy-samples}
    In \Cref{Line:eval-sample} of \Cref{Alg:eval}, we are required to sample an infinitely long sequence $\{(c_i,\{r_{i,u}\}_{u\in N(v)})\}_{1\leq i<\infty}$. 
    Obviously, it is not feasible to directly sample an infinite number of random variables for implementation.  
    Instead, we adopt the principle of deferred decision: each $c_i$ and $r_{i,u}$ is generated only when they are accessed in  \Cref{Line:eval-for,Line:eval-if} of \Cref{Alg:eval}. 
\end{remark}

Next, we show that the marginal sampling oracle $\eval^{\+O}(v)$ in \Cref{Alg:eval} satisfies the conditions for both correctness and efficiency. 

\begin{lemma}\label{lemma:eval-correctness}
Suppose that the input $q$-spin system $\+S$ satisfies \Cref{cond:main}. Then, the marginal sampling oracle $\eval^{\+O}(v)$ implemented as in \Cref{Alg:eval} satisfies both \Cref{condition:local-correctness} and \Cref{condition:immediate-termination}.
\end{lemma}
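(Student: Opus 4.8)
The plan is to verify the two conditions separately: \Cref{condition:local-correctness} by reducing to the already-proven correctness of \Cref{Alg:coupler-spin-system}, and \Cref{condition:immediate-termination} by a direct first-iteration computation.

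\emph{Local correctness.} I would couple a run of \Cref{Alg:eval} with a run of \Cref{Alg:coupler-spin-system} that uses the same i.i.d.\ sequence $\{(c_i,(r_{i,u})_{u\in N(v)})\}_{i\ge 1}$, and where $\+O(u)$ consistently returns the same value $\sigma(u)$ in both (here $\sigma\in[q]^{N(v)}$ is arbitrary). Fix $i\ge 1$. In \Cref{Alg:coupler-spin-system} index $i$ is accepted iff $r_{i,u}<A_e(\sigma(u),c_i)$ for every edge $e=(u,v)\in E$. In \Cref{Alg:eval}, after processing $i$ one has $flag=0$ iff some edge $e=(u,v)\in E$ satisfies both $r_{i,u}\ge \lb$ (the test in \Cref{Line:eval-cond}) and $r_{i,u}\ge A_e(\sigma(u),c_i)$ (the test in \Cref{Line:eval-if}). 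By the permissive clause of \Cref{cond:main} we have $A_e(\sigma(u),c_i)\ge \lb$, so $r_{i,u}\ge A_e(\sigma(u),c_i)$ already forces $r_{i,u}\ge \lb$; hence the first clause is redundant and $flag=0$ iff some edge has $r_{i,u}\ge A_e(\sigma(u),c_i)$, i.e.\ $flag=1$ iff $r_{i,u}<A_e(\sigma(u),c_i)$ for all edges $e=(u,v)\in E$. Therefore both algorithms accept exactly the same set of indices, so they return $c_{i^*}$ for the same smallest accepted index $i^*$ (which is finite almost surely, since by permissiveness each iteration succeeds with probability at least $\lb^{\deg(v)}>0$). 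By \Cref{lemma:coupler-correctness} this common output is distributed as $\mu^\sigma_v$, establishing \Cref{condition:local-correctness}. I would also note explicitly that the deferred-decision implementation of \Cref{remark:lazy-samples} only changes when the variables $c_i,r_{i,u}$ are drawn, not their joint law, hence is distribution-preserving.

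\emph{Immediate termination.} I would observe that \Cref{Alg:eval} queries $\+O(u)$ only in \Cref{Line:eval-if}, which is reached only when the test $r_{i,u}\ge \lb$ in \Cref{Line:eval-cond} passes. Hence on the event that $r_{1,u}<\lb$ for every $u\in N(v)$, the first iteration makes no oracle call, keeps $flag=1$, and returns $c_1$; in particular $\eval^{\+O}(v)$ terminates with no call to $\+O$. Since $\lb=1-\frac{1-\delta}{2\Delta}>0$ for $\delta>0$, and the reals $(r_{1,u})_{u\in N(v)}$ are independent uniform on $[0,1]$, this event has probability $\lb^{\deg(v)}>0$. Thus $\Pr{\+E_v}>0$, which is exactly \Cref{condition:immediate-termination}.

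I do not anticipate a real obstacle here; the only points needing care are making the index-by-index equivalence between the two algorithms precise (and recording that the smallest accepted index is finite almost surely), and stating clearly that the lazy evaluation of \Cref{remark:lazy-samples} does not affect the output distribution.
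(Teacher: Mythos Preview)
Your proposal is correct and follows essentially the same approach as the paper: for \Cref{condition:local-correctness} you spell out the index-by-index comparison between \Cref{Alg:eval} and \Cref{Alg:coupler-spin-system} that the paper summarizes as ``by \Cref{lemma:coupler-correctness} and comparing Algorithms~\ref{Alg:coupler-spin-system} and~\ref{Alg:eval}'', and for \Cref{condition:immediate-termination} you give the same first-round argument, bounding $\Pr{\+E_v}\ge \lb^{\deg(v)}$ where the paper writes $\lb^{\Delta}$.
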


\begin{proof}

Condition \ref{condition:local-correctness} can be verified directly by \Cref{lemma:coupler-correctness} and comparing Algorithms \ref{Alg:coupler-spin-system} and \ref{Alg:eval}.

For Condition \ref{condition:immediate-termination}, recall that $\+E_v$ is the event that $\eval^{\+O}(v)$ in \Cref{Alg:eval} terminates without any calls to $\+O$. Note that $\+E_v$ occurs if and only if \Cref{Alg:eval} terminates within $1$ round of the loop at \Cref{Line:reject} and $r_{1,u}<\lb$ holds for each $u\in N(v)$. Since $c_1$ and each $r_{1,u}$ are independent, assuming the $q$-spin system $\+S$ satisfies \Cref{cond:main} with constant $\delta>0$, we immediately have
\begin{align*}
\Pr{\+E_t} \geq C^{\Delta}>0. 
\end{align*}
It verifies that $\eval^{\+O}(v)$ satisfies Condition \ref{condition:immediate-termination} assuming Condition \ref{cond:main} holds.
\end{proof}

\begin{lemma}\label{lemma:eval-efficiency}
Suppose that the input $q$-spin system $\+S$ satisfies \Cref{cond:main}. Then, the marginal sampling oracle $\eval^{\+O}(v)$ implemented as in \Cref{Alg:eval} satisfies \Cref{condition:fast-termination}.
\end{lemma}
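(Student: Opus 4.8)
\medskip

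The plan is to fix an arbitrary vertex $v \in V$, set $d \defeq |N(v)| \le \Delta$, fix an arbitrary neighbourhood configuration $\sigma \in [q]^{N(v)}$, and analyse the run of \Cref{Alg:eval} in which every query $\+O(u)$ consistently returns $\sigma(u)$, as stipulated in \Cref{condition:fast-termination}. I would first record two structural facts about \Cref{Alg:eval}. \emph{(i)} The outer iterations $i = 1, 2, \dots$ at \Cref{Line:reject} are driven by i.i.d.\ tuples $(c_i, (r_{i,u})_{u \in N(v)})$, and in iteration $i$ the algorithm makes exactly $N_i \defeq \sum_{u \in N(v)} \mathbbm{1}\!\left[r_{i,u} \ge \lb\right]$ queries to $\+O$ — one for each neighbour $u$ whose test at \Cref{Line:eval-cond} passes — and this count is the same whether or not iteration $i$ accepts, since the inner loop at \Cref{Line:eval-for} has no early exit. \emph{(ii)} Because $A_e(\sigma(u), c) \ge \lb$ for every edge $e = (u,v)$ by the permissiveness in \Cref{cond:main}, iteration $i$ accepts iff $r_{i,u} < A_e(\sigma(u), c_i)$ holds for all $u \in N(v)$, hence accepts with probability $p \defeq \sum_{c \in [q]} \lambda_v(c) \prod_{e = (u,v) \in E} A_e(\sigma(u), c) > 0$, independently across iterations.

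Next, writing $I$ for the index of the first accepting iteration, we have $\+T^\sigma_v = \sum_{i \ge 1} N_i\,\mathbbm{1}[\,i \le I\,]$. Since $\{i \le I\}$ — the event that iterations $1, \dots, i-1$ all rejected — depends only on the randomness of those iterations, it is independent of $N_i$; so by Wald's identity (equivalently, $\E{\+T^\sigma_v} = \sum_{i\ge1}\E{N_i}\Pr{i\le I} = \E{N_1}\sum_{i\ge1}(1-p)^{i-1} = \E{N_1}/p$), we get $\E{\+T^\sigma_v} = \E{N_1}/p$. As each $r_{1,u}$ is uniform on $[0,1]$, $\E{N_1} = d\,(1-\lb)$, so $\E{\+T^\sigma_v} = d\,(1-\lb)/p$.

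Finally I would lower-bound $p$. Using $A_e(\sigma(u), c) \ge \lb$ once more, $\sum_{c}\lambda_v(c) = 1$ (the normalization in \Cref{cond:main}), and Bernoulli's inequality, $p \ge \lb^{d} = \bigl(1-(1-\lb)\bigr)^{d} \ge 1 - d\,(1-\lb)$. Substituting $1-\lb = \tfrac{1-\delta}{2\Delta}$ and $d \le \Delta$ gives $d\,(1-\lb) \le \tfrac{1-\delta}{2}$, hence $p \ge 1 - \tfrac{1-\delta}{2} \ge \tfrac12$, and therefore $\E{\+T^\sigma_v} = d(1-\lb)/p \le \tfrac{(1-\delta)/2}{1/2} = 1-\delta$, which is exactly the requirement of \Cref{condition:fast-termination}. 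I expect the computations to be routine; the one place deserving care is the accounting in the second paragraph — correctly including the oracle queries issued during the \emph{accepting} iteration, and justifying the Wald-type interchange via the independence of $N_i$ from $\{i \le I\}$ — so I would first state explicitly the ``no early break'' structure of \Cref{Alg:eval} before relying on it.

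\medskip
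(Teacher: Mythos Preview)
Your proposal is correct and follows essentially the same route as the paper: both compute the per-round expected number of oracle calls as $d(1-\lb)$, the expected number of rounds as $1/p$ with $p=\sum_c\lambda_v(c)\prod_{e=(u,v)}A_e(\sigma(u),c)$, combine them via a Wald-type identity, and then lower-bound $p\ge\lb^d$ using \Cref{cond:main}. If anything, your write-up is more careful than the paper's on two points: you explicitly justify the Wald step via the independence of $N_i$ from $\{I\ge i\}$ (the paper just says ``linearity of expectation''), and you land exactly on the $1-\delta$ bound required by \Cref{condition:fast-termination} (the paper obtains $\tfrac{1-\delta}{1+\delta}$ but only records ``$<1$'').
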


\begin{proof}
Fix some $v\in V$. Also fix some $\sigma\in [q]^{N(v)}$ and assume that $\+O(u)$ returns $\sigma(u)$ within $\eval^{\+O}(v)$. 
Within each round of the outer for loop at \Cref{Line:reject} of \Cref{Alg:eval}, 
the expected total number of oracle calls to $\+O(u)$ for any $u\in N(v)$ is given by:
\begin{align}
\E{\text{number of calls to }\+O(\cdot)\text{ in one iteration}} =\sum\limits_{c\in[q]}\left(\lambda_v(c) \sum_{e=(u,v)\in E} (1-\lb)\right).\label{eq:expect-one-round}
\end{align}

 Let $\+I^{\sigma}_v$ be the number of executions of the outer for loop at \Cref{Line:reject} in $\eval^{\+O}(v)$. Note that $\+I^{\sigma}_v$ corresponds exactly to the number of executions of \Cref{item:propose} in the rejection sampling. Therefore,
\begin{align}
\E{\+{I}^{\sigma}_v} = \frac{1}{\sum\limits_{c\in[q]}\left(\lambda_v(c)\prod\limits_{e=(u,v)\in E}A_e(\sigma(u),c)\right)}.
\label{eq:expect-number-of-round}
\end{align}


Note that the number of oracle calls in each iteration are i.i.d.~random variables with finite mean. 
Applying Wald’s equation, we have:
\begin{align*}
\E{\+{T}^{\sigma}_v}
&=
\E{\+{I}^{\sigma}_v}\cdot\E{\text{number of calls to }\+O(\cdot)\text{ in one iteration}}.
\end{align*}

It then follows from  \eqref{eq:expect-one-round} and \eqref{eq:expect-number-of-round} that:
\begin{align*}
 &&   \E{\+{T}^{\sigma}_v}
   =&\, \frac{\sum\limits_{c\in[q]}\left(\lambda_v(c) \sum\limits_{e=(u,v)\in E} (1-\lb)\right)}{\sum\limits_{c\in[q]}\left(\lambda_v(c)\prod\limits_{e=(u,v)\in E}A_e(\sigma(u),c)\right)}\\
\text{(by \Cref{cond:main})}
&& 
  \leq &\, \frac{\sum\limits_{c\in [q]}\left(\lambda_v(c)\cdot (1-\lb)\cdot |N(v)| \right)}{\sum\limits_{c\in[q]}\left(\lambda_v(c)\cdot C^{\Delta}\right)} \\
 \text{(by $\lb=1-\frac{\delta}{2\Delta}$)}        
 &&\leq &\, \frac{\sum\limits_{c\in [q]}(\lambda_v(c)(1-\delta)/2)}{\sum\limits_{c\in [q]}(\lambda_v(c)(1+\delta)/2)} \\
  &&       < &\, 1.
\end{align*}

It proves that $\eval^{\+O}(v)$ satisfies Condition \ref{condition:fast-termination} assuming Condition \ref{cond:main} holds.
\end{proof}

We are now ready to prove \Cref{theorem:local-sampler,theorem:inference-permissive}.

\begin{proof}[Proof of \Cref{theorem:local-sampler}]
    We use \Cref{Alg:lsample} as our local sampler, where the subroutine $\eval^{\+O}(v)$ is implemented by \Cref{Alg:eval} (using the principle of deferred decision as explained in \Cref{remark:lazy-samples}). 
    
    Here, the correctness of sampling follows from \Cref{lemma:lsample-correctness,lemma:eval-correctness}.
    
    For efficiency, by \Cref{lemma:lsample-efficiency,lemma:eval-efficiency}, we have that the expected number of $\resolve$  calls is $O(|\Lambda|)$. Also, note that each outer loop either terminates directly or results in at least one call to $\resolve$. Hence, the overall running time is bounded by $\Delta \log q$ times the total number of $\resolve$  calls, which is $O(|\Lambda|\Delta\log q)$ in expectation. 
\end{proof}

\begin{proof}[Proof of \Cref{theorem:inference-permissive}]
    Note that by the self-reducibility of spin systems with soft constraints (i.e., \Cref{cond:main} holds under arbitrary pinning), we have the uniform lower bound:
    \[
     \forall v\in V,c\in [q],\quad \mu^{\sigma}_v(c)\geq \frac{C(\Delta,\delta)^{\Delta}}{q}\geq \frac{1}{2q}.
    \]

    This ensures that the marginal probabilities $\mu^{\sigma}_v(x)$ are all bounded away from zero. Therefore, by the Chernoff bound, for each $x \in Q_v$, the value $\mu^{\sigma}_v(x)$ can be estimated to within a multiplicative error of $(1 \pm \varepsilon)$ with probability at least $0.9$ using $O(q/\varepsilon^2)$ approximate samples.

    According to \Cref{theorem:local-sampler}, each such sample can be generated in expected time $O(|\Lambda|\Delta\log q)$, where we easily handle the conditioning in $\sigma$ by setting $M(t)=\sigma_{(v_{i(t)})}$ whenever $\resolve(t)$ is called for some $v_{i(t)}\in \Lambda$,  so the total expected cost to estimate all $\mu^{\sigma}_v(x)$ for $x \in Q_v$ is bounded by
\begin{align}\label{eq:inference-complexity-bound}
O\left(|\Lambda| q^2\log q\Delta\varepsilon^{-2}\right).
\end{align}
Applying Markov's inequality, the probability that the total cost exceeds this bound is at most $0.1$. Therefore, by truncating the algorithm’s running time to \eqref{eq:inference-complexity-bound}, we obtain a bounded-cost algorithm which, with probability at least $0.9 - 0.1 = 0.8$, returns estimates of all $\mu^{\sigma}_v(x)$ within a multiplicative factor of $(1 \pm \varepsilon)$ for all $x \in Q_v$.

Finally, to boost the success probability to at least $1 - \delta/q$, we repeat this procedure independently $O(\log(q/\delta))$ times and take the median of the resulting estimates. This yields the desired algorithm claimed in the theorem by applying Chernoff's bound again.
\end{proof}

\section{Extension: a local sampler for \texorpdfstring{$q$}{q}-colorings}\label{sec:coloring}
In this section, we show how to extend our CTTP framework to design local samplers for a prototypical spin system with truly repulsive hard constraints: the proper $q$-colorings.

A key challenge in applying the CTTP framework to problems with hard constraints lies in the absence of unconditional marginal lower bounds. In the systematic scan Glauber dynamics $\+P(T)=(X_t)_{t=-T}^{0}$ for $q$-colorings, determining the outcome of an update at node $v=v_{i(t)}$ at time $t$ requires knowledge of the color set $S_t$, defined as the set of colors assigned to the neighbors of $v$ in the configuration $\sigma_t$:
\begin{equation}\label{eq:definition-St}
S_t \defeq \{X_t(u) \mid u \in N(v)\}.
\end{equation}
To correctly perform the update at time $t$, one needs to sample uniformly from $[q] \setminus S_t$. However, if for any neighbor $u\in N(v)$, the outcome of its most recent update occurring at time $\pred_t(u)$ is unknown, then the update at time $t$ cannot be resolved with positive probability. In other words, the immediate termination condition (\Cref{condition:immediate-termination}) does not hold without additional information.

To generate a uniform sample from the set $[q] \setminus S_t$, we could use a rejection sampling procedure similar to that described in the previous section: propose a color $c$ uniformly at random from $[q]$, and accept it if $c \notin S_t$. Since $S_t$ contains at most $\Delta$ colors, when $q > C\Delta$ for some sufficiently large constant $C$, this rejection sampling succeeds with constant probability. However, a key challenge remains: determining whether $c \in S_t$ requires knowledge of $X_t(u)$ for each neighbor $u \in N(v)$, which may lead to endless recursion.

The crucial observation is that for each neighbor $u \in N(v)$, at the time of its last update $t' = \pred_u(t)$, the probability that $X_{t'}(u) = c$ is at most $\frac{1}{|[q] \setminus S_{t'}|}$, which is bounded above by $\frac{2}{q}$ under the assumption $q \geq 2\Delta$. This allows us to implement a probabilistic filter to test whether $X_{t'}(u) \neq c$ as follows:
\begin{itemize}
\item With probability $1 - \frac{2}{q}$, we directly \emph{certify} that $X_{t'}(u) \neq c$;
\item With the remaining probability $\frac{2}{q}$, we first check whether $c \in S_{t'}$:
\begin{itemize}
\item If $c \in S_{t'}$, we can immediately \emph{certify} that $X_{t'}(u) \neq c$. This condition can be verified recursively by invoking a similar procedure for each neighbor $w \in N(u)$;
\item Otherwise, we set $X_{t'}(u) = c$ with probability $\frac{q/2}{|[q] \setminus S_{t'}|}$, and otherwise \emph{certify} that $X_{t'}(u) \neq c$.
\end{itemize}
\end{itemize}

Observe that the filtering procedure terminates immediately with probability at least $1 - \frac{2}{q}$. When $q$ is sufficiently large, this high probability of immediate termination serves as the basis of the recursion, allowing us to avoid infinite recursion. 
To fully implement this filtering process, two aspects remains to be specified: what it means to ``certify'' that a color cannot appear at a given timestamp $t$, and how to simulate a trial with an unknown success probability $\frac{q/2}{|[q] \setminus S_{t'}|}$. 
We briefly outline these below:
\begin{itemize}
    \item To certify that a certain color cannot appear at timestamp $t$, we have an auxiliary  mapping $L : \mathbb{Z} \to 2^{[q]}$ maintained globally outside the recursion, where each $L(t)$ is initially set to $[q]$. This data structure records the set of \emph{available colors} at each timestamp. The color assigned at time $t$ is interpreted as being uniformly sampled from $L(t) \setminus S_t$. Thus, to certify that $c$ cannot be the outcome of the update at time  $t$, we simply remove $c$ from $L(t)$.
    \item Although we cannot directly compute the probability $\frac{q/2}{|[q] \setminus S_{t'}|}$ since $S_{t'}$ is not explicitly known, we can query the membership of certain colors in $S_{t'}$ through recursive calls. This allows us to employ a \emph{Bernoulli factory algorithm} to simulate a Bernoulli trial with the correct success probability $\frac{q/2}{|[q] \setminus S_{t'}|}$.
\end{itemize}

\begin{remark}[adaptivity of the grand coupling]\label{remark:grand-couping-adaptivity}
As noted in \Cref{remark:implicit-grand-coupling}, simulating the Markov chain in this manner implicitly defines a grand coupling over all starting configurations. In the procedure described above, this grand coupling is \emph{adaptive}: the coupling decision at a given timestamp $t$ may depend on the outcome of the coupling at an earlier time $t' < t$. This adaptivity is a crucial feature of our local sampler for $q$-colorings. Later, we will show that it does not compromise the correctness of the Coupling Towards The Past (CTTP) framework. We also note that similar adaptive strategies have been employed in the design of several perfect sampling algorithms based on the Coupling From The Past (CFTP) method~\cite{bhandari2020improved,jain2021perfectly}. However, our strategy fundamentally differs from those in the CFTP framework, due to the additional constraints imposed by local samplers.
\end{remark}



\subsection{The local sampler}
We now formally present our local sampler for $q$-colorings. As discussed above, our local sampler is built on the CTTP framework,  with a key modification that allows partial information whether an updated outcome equals some particular color to be obtained. 
The local sampler relies on two core subroutines:
\begin{itemize}
    \item $\resolve(t)$, which takes as input a timestamp $t\leq 0$, and returns the outcome of the update at time $t$ (i.e., the color to which the vertex is updated);
    \item $\chk(t,c)$, which takes as input a timestamp $t\leq 0$ as well as a color $c\in [q]$, and determines whether the outcome of the update at time $t$ equals $c$;
\end{itemize}

Our local sampler for $q$-colorings is formally described in \Cref{Alg:lsample-coloring}.

\begin{algorithm}
\caption{$\lsample(\Lambda;M,L)$} \label{Alg:lsample-coloring}
\SetKwInput{KwData}{Global variables}
\KwIn{A $q$-coloring instance $G=(V,E)$, a subset of variables $\Lambda\subseteq V$.}
\KwOut{A random configuration $X \in [q]^{\Lambda}$.}
\KwData{Two mappings $M: \mathbb{Z}\to [q]\cup\{\perp\}, L: \mathbb{Z} \to 2^{[q]}$.}
$X\gets \emptyset, M\gets \perp^{\mathbb{Z}},L\gets [q]^{\mathbb{Z}}$\label{Line:local-sampler-coloring-initialization}\;
\ForAll{$v\in \Lambda$}{
    $X(v)\gets \resolve(\pred_0(v);M,L)$\;\label{Line:local-sampler-coloring-resolve}
}
\Return $X$\;
\end{algorithm}

In \Cref{Alg:lsample-coloring}, two global mappings, $M$ and $L$, are maintained for memoization to ensure the consistency of the algorithm. Specifically, $M(t)$ stores the resolved values of updates at each time $t$, and $L(t)$ stores the remaining set of available colors for unresolved updates at each time $t$. 

We now formally present the two core procedures, $\resolve(t)$ and $\chk(t,c)$, which underpin our construction of the local sampler. Similar to the case of spin systems with soft constraints, these procedures are recursively defined, making calls to themselves on earlier timestamps. To facilitate their definition, we adopt a variant of the abstraction for local sampling procedures, resembling the $\eval^{\+O}(v)$ introduced in \Cref{definition:locally-defined-grand-coupling}, extended to allow access to partial information.

\begin{definition}[local sampling procedures with access to partial information]\label{definition:partial-information}
For each vertex $v\in V$ and color $c\in [q]$, we define the procedures 
$\eval^{\+O}(v)$ and $\eval 
^{\+O}(v,c)$, which make oracle queries to:
\begin{itemize}
\item $\+O(u)$, which consistently returns a value $c_u\in [q]$ for each neighbor $u\in N(v)$;
\item $\+O(u,c)$, which consistently returns a binary value $x_{u,c}\in \{0,1\}$ for each $u\in N(v)$ and $c\in [q]$, such that $x_{u,c}=1$ if and only if $c_u=c$.
\end{itemize}
In addition, since local sampling steps require knowledge of the set of currently available colors, finite slices of the global mapping $L$ may be passed by reference to $\eval^{\+O}(v)$ and $\eval ^{\+O}(v,c)$.
\end{definition}

We remark that \Cref{definition:partial-information} differs from the marginal sampling oracles in \Cref{definition:locally-defined-grand-coupling} as it imposes no requirement on the distribution of the outcomes produced by $\eval^{\+O}(v)$ and $\eval 
^{\+O}(v,c)$.

The procedures $\resolve(t)$ and $\chk(t,c)$ are detailed in \Cref{Alg:resolve-coloring,Alg:check}, respectively. 

The procedure $\resolve(t)$ first checks whether the update at time $t$ has already been resolved,  and if not, invokes $\eval^{\+O}(v)$ to resolve it, passing the list of available colors $L(t)$ by reference. Similarly, the procedure $\chk(t,c)$ performs a series of preliminary checks before invoking $\eval^{\+O}(v,c)$ determine equality to $c$, also passing $L(t)$ by reference and potentially updating $L(t)$ at the end. 

An important detail occurs in \Cref{Line:check-list-size,Line:check-resolve-instead} of $\chk(t,c)$: if the size of the list $L(t)$ falls below $50\Delta$, the algorithm invokes $\resolve(t)$ to fully resolve the outcome at time $t$ rather than merely checking equality to $c$. This mechanism ensures that the list size $|L(t)|$ remains above $50\Delta$ whenever it is passed to either $\eval^{\+O}(v)$ or $\eval^{\+O}(v,c)$ during the algorithm's execution, assuming $q \geq 50\Delta$.

\begin{algorithm}
\caption{$\resolve(t; M,L)$} \label{Alg:resolve-coloring}
\SetKwInput{KwData}{Global variables}
\KwIn{A $q$-coloring instance $G=(V,E)$, a timestamp $t\in \mathbb{Z}$.}
\KwOut{A random configuration $X \in [q]^{\Lambda}$.}
\KwData{Two mappings $M: \mathbb{Z}\to [q]\cup\{\perp\}, L: \mathbb{Z} \to 2^{[q]}$.}
\lIf(\tcp*[f]{check if the outcome is already resolved}){$M(t) \neq \perp$}{\Return $M(t)$\label{Line:resolve-coloring-finite-memoization}}
$M(t)\gets \eval^\+O(v_{i(t)};L(t))$, with{
\begin{itemize}
\item $\+O(u)$ replaced by $\resolve(\pred_t(u);M,L)$ for each $u\in N(v_{i(t)})$\;\label{Line:resolve-coloring-sample}
\item $\+O(u,c)$ replaced by $\chk(\pred_t(u),c;M,L)$ for each $u\in N(v_{i(t)})$\;
\end{itemize}}
\Return $M(t)$\; \label{Line:resolve-coloring-final-return}
\end{algorithm}

\begin{algorithm}
\caption{$\chk(t,c;M,L)$ } \label{Alg:check}
\SetKwInput{KwData}{Global variables}
\KwIn{A $q$-coloring instance $G=(V,E)$, a timestamp $t\in \mathbb{Z}$, a color $c\in [q]$.}
\KwOut{A binary value $x\in \{0,1\}$.}
\KwData{Two mappings $M: \mathbb{Z}\to [q]\cup\{\perp\}, L: \mathbb{Z} \to 2^{[q]}$.}

\If(\tcp*[f]{check if the outcome is already resolved}){$M(t)\neq \perp$}{
\If{$M(t)=c$} {\Return $1$\;} \Else {\Return $0$\;}
}
\If(\tcp*[f]{check if $c$ is already not available}){$c\notin L(t)$}{
    \Return $0$; 
}
\If(\tcp*[f]{resolve the full outcome instead when $\abs{L(t)}\leq 50\Delta$}){$\abs{L(t)}\leq 50\Delta$\label{Line:check-list-size}}{
    $\resolve(t;M,L)$; \label{Line:check-resolve-instead}
}
$x\gets \eval^\+O(v_{i(t)},c;L(t))$, with{
\begin{itemize}
\item $\+O(u)$ replaced by $\resolve(\pred_t(u);M,L)$ for each $u\in N(v_{i(t)})$\;\label{Line:check-sample}
\item $\+O(u,c)$ replaced by $\chk(\pred_t(u),c;M,L)$ for each $u\in N(v_{i(t)})$\;
\end{itemize}}
\If{$x=1$}{
    $M(t)\gets c$\;
}
\Else{
    $L(t)\gets L(t)\setminus \{c\}$\label{Line:check-list-remove}\;
}
\Return $x$\;
\end{algorithm}

We now present our specific construction of local sampling procedures for $q$-colorings. To fully evaluate a specific outcome of the update and realize $\eval^{\+O}(v;L)$, we employ the strategy  described earlier. Let $S=\{\+O(u)\mid u\in N(v)\}$. To obtain a uniform sample from $L\setminus S$, we repeatedly select a  color $c\in L$ uniformly at random and check whether $c\in S$ by querying the oracle $\+O(u,c)$ for each $u\in N(v)$. This procedure is detailed in \Cref{Alg:evaluate-coloring}.

\begin{algorithm}
\caption{$\eval^{\+O}(v;L)$} \label{Alg:evaluate-coloring}
\SetKwInput{KwData}{Oracle access}
\SetKwInput{KwMaintain}{Global variables}
\KwIn{A $q$-coloring instance $G=(V,E)$, a variable $v\in V$, and a list of available colors $L\subseteq [q]$, with the promise that $|L|\geq 50\Delta$.}
\KwOut{A value $c \in [q]$.}
\KwData{$\+O(u)$ for each $u\in N(v)$ and $\+O(u,c)$ for each $u\in N(v)$ and $c\in [q]$.}
\KwMaintain{a list of available colors $L\subseteq [q]$, with the promise that $c\notin L$ and $|L|\geq 50\Delta$.}
\While{$\True$}{
choose $c\in L$ uniformly at random\;\label{Line:evaluate-coloring-choose}
\If{$\+O(u,c)=0$ for each $u\in N(v)$ \label{Line:evaluate-coloring-check} }{\Return $c$\;}
$L\gets L\setminus \{c\}$\;\label{Line:evaluate-coloring-update}
}
\end{algorithm}

To determine whether the outcome equals a specific color $c$ and realize $\eval^{\+O}(v, c; L)$, we leverage the observation that when the size of the available color list satisfies $|L| \geq 2\Delta \geq 2|S|$, the probability that a uniform sample from $L \setminus S$ equals $c$ is small. This motivates the following strategy to simulate a coin that always returns $0$ when $c \in S$, and when $c \notin S$, returns $1$ with probability $1/|L \setminus S|$ and $0$ otherwise:
\begin{itemize}
    \item With probability $1-\frac{1}{|L|/2}$, return $0$ directly;
    \item Otherwise:
    \begin{itemize}
        \item If $c\in S$, return $0$;
        \item Otherwise, return $0$ with probability $\frac{|L|/2}{|L\setminus S|}$, and return $1$ otherwise.
    \end{itemize}
\end{itemize}
When the first step does not immediately return $0$, we query $\+O(u, c)$ for each $u\in N(v)$ to check whether $c \in S$. If $c\notin S$, a challenge arises: the set  $S$ is unknown to the algorithm, so we cannot directly simulate a coin with success probability $\frac{|L|/2}{|L \setminus S|}$. While one could retrieve $S$ by querying $\+O(u)$ for every neighbor $u \in N(v)$,  this incurs a prohibitive number of recursive calls. 
Fortunately, we can efficiently simulate a coin with success probability $\frac{|L \setminus S|}{|L|}$ using the following procedure:
\begin{itemize}
    \item Choose a color $c\in L$ uniformly at random.
    \item Determine whether $c\in S$: query $\+O(u,c)$ for all $u\in N(v)$; if any query returns 1, output 0; otherwise, output 1. 
\end{itemize}
Moreover, given access to such a $\frac{|L\setminus S|}{|L|}$-coin, we can simulate a coin with success probability $\frac{|L|/2}{|L\setminus S|}$ by solving an instance of the Bernoulli factory problem~\cite{vonNeumann1951}. 
Specifically, we employ a Bernoulli factory algorithm for division: given access to a coin with success probability $p=\frac{\abs{L\setminus S}}{\abs{L}}$, we simulate a coin with success probability $\frac{1}{2p}$. 

Our complete procedure for checking whether the outcome equals a specific color $c$ is described in \Cref{Alg:evaluate-coloring-partial}, while the subroutine for simulating a $\frac{\abs{L\setminus S}}{\abs{L}}$-probability coin is described in \Cref{Alg:access-p-coin}. 
 
\begin{algorithm}
\caption{$\eval^{\+O}(v,c;L)$} \label{Alg:evaluate-coloring-partial}
\SetKwInput{KwData}{Oracle access}
\SetKwInput{KwMaintain}{Global variables}
\KwIn{A $q$-coloring instance $G=(V,E)$, a variable $v\in V$, a color $c\in [q]$ }
\KwOut{A binary value $x \in \{0,1\}$.}
\KwData{$\+O(u)$ for each $u\in N(v)$ and $\+O(u,c)$ for each $u\in N(v)$ and $c\in [q]$.}
\KwMaintain{a list of available colors $L\subseteq [q]$, with the promise that $c\notin L$ and $|L|\geq 50\Delta$.}
with probability $1-\frac{1}{|L|/2}$ \Return $0$\label{Line:evaluate-coloring-partial-return-uniformity}\;
\ForAll{$u\in N(v)$}{
    If $\+O(u,c)=1$ \Return $0$;
}
let $\+C$ be an $\frac{\abs{L\setminus S}}{\abs{L}}$-probability coin where $S=\{\+O(u)\mid u\in N(v)\}$, generated as in \Cref{Alg:access-p-coin}\;
with probability $\frac{|L|/2}{\abs{L\setminus S}}$ return $1$, otherwise return $0$, where the probability $\frac{|L|/2}{\abs{L\setminus S}}$ is generated using the Bernoulli factory algorithm for division~\cite{morina2021extending}, given access to $\+C$. \label{Line:evaluate-coloring-bernoulli-factory}
\end{algorithm}

\begin{algorithm}
\caption{access to an ${|L\setminus S|}/{|L|}$-probability coin} \label{Alg:access-p-coin}
\SetKwInput{KwData}{Oracle access}
\SetKwInput{KwMaintain}{Global variables}
\KwIn{A $q$-coloring instance $G=(V,E)$, a variable $v\in V$, a color $c\in [q]$}
\KwOut{A binary value $x \in \{0,1\}$.}
\KwData{$\+O(u)$ for each $u\in N(v)$ and $\+O(u,c)$ for each $u\in N(v)$ and $c\in [q]$.}
\KwMaintain{A list of available colors $L\subseteq [q]$.}
choose $c_0\in L$ uniformly at random\;\label{Line:access-p-coin-sample}
\ForAll{$u\in N(v)$}{
    \lIf{$\+O(u,c_0)=1$}{\Return $0$}
}
\Return $1$\;
\end{algorithm}

\begin{remark}[implementation of the algorithm]\label{remark:algorithm-execution}
Note that our local sampler for $q$-colorings (Algorithms \ref{Alg:lsample-coloring}–\ref{Alg:access-p-coin}) involves a considerable number of recursive calls. To implement this algorithm efficiently, all recursive calls to 
$\resolve(t)$ and $\chk(t,c)$ are managed using a stack. At each step, the algorithm pops the call at the top of the stack --- either $\resolve(t)$ or $\chk(t,c)$ --- executes it, updates the values of $L(t)$ and $M(t)$ in the global data structure if needed, and pushes any new recursive calls onto the stack. 
Since each call to $\resolve(t)$ or $\chk(t,c)$ only recurses on strictly smaller timestamps $t'<t$, the recursions will be executed correctly as above.
\end{remark}

The Bernoulli factory for division used in \Cref{Alg:evaluate-coloring-partial} is achieved by a combination of existing constructions~\cite{Nacu2005FastSO,Hub16Bernoulli,morina2021extending}. Here, we present its correctness and efficiency. 
\begin{lemma}[correctness and efficiency of the Bernoulli factory algorithm]\label{lemma:correctness-Bernoulli-factory}
There is a Bernoulli factory algorithm accessing a $\xi$-coin with the promise that $\frac{1}{2}+\zeta\leq \xi\leq 1$ for some $\zeta>0$, terminates with probability $1$, returns $1$ with probability $\frac{1}{2\xi}$ and returns $0$ otherwise.  Moreover, the expected number of calls to the $\xi$-coin is at most $9.5\xi^{-1}\zeta^{-1}$.
\end{lemma}

The proof of \Cref{lemma:correctness-Bernoulli-factory} will be presented in \Cref{sec:bernoulli-factory}, where we also present the explicit construction of the Bernoulli factory algorithm for division.

\subsection{Correctness of the local sampler}

We now proceed to prove the correctness of the local sampler, stated in \Cref{lemma:partial-rejection-conditional-correctness}. Here, we establish only the \emph{conditional} correctness under the assumption that the local sampler always terminates. The proof of termination, along with the efficiency analysis of the algorithm, is presented in the next subsection.

\begin{lemma}[conditional correctness of the local sampler for $q$-colorings]
\label{lemma:partial-rejection-conditional-correctness}
    Suppose that the input $q$-coloring instance $G=(V,E)$ satisfies $q\geq 65\Delta$ where $\Delta\geq 1$ is the maximum degree of $G$. For any $\Lambda\subseteq V$, suppose that $\lsample(\Lambda)$ terminates almost surely, then the output $X$ of $\lsample(\Lambda)$ follows the law $\mu_{\Lambda}$, where $\mu$ denotes the uniform distribution over all proper $q$-colorings of $G$.
\end{lemma}

To prove \Cref{lemma:partial-rejection-conditional-correctness}, we first establish the correctness of the local procedures  $\eval^{\+O}(v;L)$ and $\eval^{\+O}(v,c;L)$. 
\begin{lemma}[local correctness of $\eval^{\+O}(v;L)$ and $\eval^{\+O}(v,c;L)$]\label{lemma:coloring-local-correctness}
Given as input a $q$-coloring instance $G=(V,E)$. For any vertex $v\in V$, color $c\in [q]$ and a list of available colors $L\subseteq [q]$ satisfying $\abs{L}\geq 50\Delta$ where $\Delta\geq 1$ is the maximum degree of $G$, assuming $\+O(u)$ for each $u\in N(v)$ and $\+O(u,c)$ for each $u\in N(v)$ and $c\in [q]$ as specified in \Cref{definition:partial-information} and let $S=\{\+O(u)\mid u\in N(v)\}$:
\begin{enumerate}
    \item $\eval^{\+O}(v;L)$ returns a uniform sample from $L\setminus S$;\label{item:coloring-local-correctness-1}
    \item $\eval^{O}(v,c;L)$:
    \begin{itemize}
        \item returns $0$ if $c\in S$;
        \item returns $1$ with probability $\frac{1}{\abs{L\setminus S}}$ and $0$ with probability $1-\frac{1}{\abs{L\setminus S}}$ otherwise.
    \end{itemize}\label{item:coloring-local-correctness-2}
\end{enumerate}
\end{lemma}
\begin{proof}
For $\eval^{\+O}(v;L)$, by $\abs{L}\geq 50\Delta$, Lines \ref{Line:evaluate-coloring-choose}-\ref{Line:evaluate-coloring-update} of \Cref{Alg:evaluate-coloring}, and \Cref{definition:partial-information}, $\eval^{\+O}(v;L)$ always returns a value from $L\setminus S$. Then \Cref{item:coloring-local-correctness-1} follows from the symmetry of all colors in $L\setminus S$.

It is direct that \Cref{Alg:access-p-coin} returns $0$ when $c\in S$. Also, according to that $\abs{L}\geq 50\Delta$ and the correctness guarantee of the Bernoulli factory algorithm (\Cref{lemma:correctness-Bernoulli-factory}), \Cref{Alg:access-p-coin}  indeed returns $1$ with probability $\frac{1}{\abs{L\setminus S}}$ and $0$ with the remaining probability when $c\notin S$. \Cref{item:coloring-local-correctness-2} then directly follows.
\end{proof}

Similar to the proof of correctness for our local sampler for spin systems with soft constraints, we introduce a finite-time variant of \Cref{Alg:lsample-coloring} for any finite $T > 0$, denoted as \Cref{Alg:lsample-finite-coloring}. This algorithm locally reconstructs the final state $X_0$ of $\+P(T)$. Still, the only difference between Algorithms \ref{Alg:lsample-coloring} and \ref{Alg:lsample-finite-coloring} lies in the initialization of the mapping $M$.

\begin{algorithm}[h]
\caption{$\lsample_T(\Lambda;M,L)$ (for $q$-colorings)} \label{Alg:lsample-finite-coloring}
\SetKwInput{KwData}{Global variables}
\KwIn{a $q$-coloring instance $G=(V,E)$, a subset of variables $\Lambda\subseteq V$}
\KwOut{A random configuration $X \in [q]^{\Lambda}$}
\KwData{Two mappings $M: \mathbb{Z}\to [q]\cup\{\perp\}, L: \mathbb{Z} \to 2^{[q]}$.}
$X\gets \emptyset, M(t)\gets X_{-T}(v_{i(t)}) \text{ for each } t\leq -T,  M(t)\gets \perp \text{ for each } t> -T$, $L\gets [q]^{\mathbb{Z}}$\label{Line:lsample-finite-coloring-initialization}\;
\ForAll{$v\in \Lambda$}{
    $X(v)\gets \resolve(\pred_0(v);M,L)$\;
}
\Return $X$\;
\end{algorithm}
We then present the crucial lemma for the correctness of the finite-time version of the algorithm. For any finite $T>0$ and $-T\leq t\leq 0$, we let $X_{T,t}$ be the state of $X_t$ in $\+P(T)$. The following lemma shows $\lsample_T$ indeed simulates $\+P(T)$ for $q$-colorings.
 \begin{lemma}\label{lemma:lsample-coloring-identical}
  Suppose that the input $q$-coloring instance $G=(V,E)$ satisfies $q\geq 50\Delta$ where $\Delta\geq 1$ is the maximum degree of $G$. For any $\Lambda\subseteq V$, the value returned by $\lsample_T(\Lambda)$ is identically distributed as $X_{T,0}(\Lambda)$.
 \end{lemma}
 \begin{proof}
 We claim that after the initialization step (Line~\ref{Line:lsample-finite-coloring-initialization}) of \Cref{Alg:lsample-finite-coloring}, it is possible to couple the randomness of the algorithm and the process $(X_t)_{-T \leq t \leq 0}$ in such a way that the following two invariants hold for any $T \geq 0$:
 \begin{enumerate}
 \item Whenever $\chk(t, c)$ is called for some timestamp $-T \leq t \leq 0$ and some color $c \in [q]$, the outcome is $1$ if and only if $X_t(v_{i(t)}) = c$, where we define $X_t = X_{-T}$ for all $t \leq -T$; \label{item:invariant-1}
\item Whenever $\resolve(t)$ is called for some timestamp $-T \leq t \leq 0$, its output is $X_t(v_{i(t)})$, where again $X_t = X_{-T}$ for all $t \leq -T$. \label{item:invariant-2}
\end{enumerate}
Once such a coupling is established, the lemma immediately follows.

We then verify the existence of this coupling by induction on the length of the process, $T$.

The base case is when $T=0$, and the invariants hold by construction due to the initialization of the mapping $M$ in \Cref{Alg:lsample-coloring}.

Assume the claim holds for some $T - 1 \geq 0$. Consider the case for $T$. We rely on two facts:
\begin{itemize}
\item The process $(X_t)_{-T \leq t \leq 0}$ forms a Markov chain.
\item Any recursive call to $\resolve(t)$ or $\chk(t, c)$ only involves timestamps $t' < t$.
\end{itemize}
These ensure that we can apply the induction hypothesis to couple the randomness for all timestamps strictly less than $t = 0$.

We now extend the coupling to the timestamp $t = 0$. Let $v = v_{i(0)}$, and
  let $S=\bigcup\limits_{u\in N(v)}X_{0}(v_{i(\pred_u(0)})$, i.e., $S$ is the set of colors assigned to the neighbors of $v$ in the configuration $X_0$. By the transition rule of the Markov chain, the value $X_0(v)$ is then sampled uniformly from $[q] \setminus S$.

Under the inductive coupling at times $t < 0$, the oracle values $\+O(u)$ returned in any call to $\chk(0, c)$ or $\resolve(0)$ are exactly $X_0(v_{i(\pred_u(0))})$. Thus, we can extend the coupling to $t = 0$ as follows. 

Maintain a local list $L'$, initially equal to $[q]$, and ensure that $L'$ and the global list $L(0)$ used in the algorithm remain synchronized throughout (meaning the invariant $L'=L(0)$ always holds):
 \begin{itemize}
    \item Whenever $\resolve(0)$ is called:
    \begin{itemize}
        \item If $M(0)\neq \perp$, the output is deterministic and no further randomness needs to be coupled;
        \item Otherwise, by $q\geq 50\Delta$ and Lines~\ref{Line:check-list-size}–\ref{Line:check-resolve-instead} of $\chk(t,c)$, we have $\abs{L(0)}\geq 50\Delta$. By \Cref{item:coloring-local-correctness-1} of \Cref{lemma:coloring-local-correctness}, we can couple the randomness so that the outcome of $\resolve(0)$ is exactly $X_{0}(v_{i(\pred_u(0))})$.
    \end{itemize}
     \item Whenever $\chk(0,c)$ is called for some $c\in [q]$:
     \begin{itemize}
         \item If $M(0)\neq \perp$ or $c\notin L(0)$, the output is deterministic and no further randomness needs to be coupled;
         \item Otherwise, if $\abs{L(0)}\leq 50\Delta$, $\resolve(0)$ is called instead, and this reduces to the case we already proved;
         \item Otherwise we have $\abs{L(0)}\geq 50\Delta$. By \Cref{item:coloring-local-correctness-2} of \Cref{lemma:coloring-local-correctness}, we can couple the randomness so that:
         \begin{itemize}
             \item If $\chk(0,c)$ returns 1, let $X_{0}(v_{i(\pred_u(0))})=c$ (both happens with probability $\frac{1}{|L'\setminus S|}=\frac{1}{|L(0)\setminus S|}$);
             \item Otherwise, remove $c$ from $L'$ (both happens with probability $1-\frac{1}{|L'\setminus S|}=1-\frac{1}{|L(0)\setminus S|}$). As $c$ is also removed from $L(0)$ in this case, $L'$ and $L(0)$ stay the same.
         \end{itemize}
     \end{itemize}
 \end{itemize}
 This completes the construction of the coupling, thereby proving the lemma.
\end{proof}

Now, we are finally ready to prove \Cref{lemma:partial-rejection-conditional-correctness}.
\begin{proof}[Proof of \Cref{lemma:partial-rejection-conditional-correctness}]
For any $T\ge 0$, we couple the randomness of $\lsample(\Lambda)$ with $\lsample_T(\Lambda)$ by pre-sampling all random variables used  within $\resolve(t)$ and $\chk(t,c)$ for each $t\leq 0$. Here, the coupling fails if and only if recursion reaches some timestamp $t'\leq -T$. Note that when $\lsample_T(\Lambda)$ terminates almost surely, the probability of any recursion within $\lsample_T(\Lambda)$ reaching some timestamp $t'\leq -T$ must tend to $0$ as $T$ tends to infinity, meaning the coupling above gives us
\begin{equation}\label{eq:dtv-lsample-converge}
    \lim\limits_{T\to \infty}\DTV{\lsample(\Lambda)}{\lsample_T(\Lambda)}=0.
\end{equation}

Note that when $q\geq \Delta+1$, we have $\+P(T)$ is irreducible and \Cref{thm-convergence} implies that
\begin{equation}\label{eq:dtv-convergence-irreducibility-2}
\lim_{T\to \infty}\DTV{\mu_{\Lambda}}{X_{T,0}(\Lambda)}=0,
\end{equation}
Therefore, 
 \begin{align*}
  &\DTV{\mu_{\Lambda}}{\lsample(\Lambda)} \\
  \le~&~\limsup_{T\rightarrow\infty}\DTV{\mu_{\Lambda}}{\lsample_T(\Lambda)} \tag{by triangle inequality}\\
  +~&~ \limsup_{T\rightarrow\infty}\DTV{\lsample_T(\Lambda)}{\lsample(\Lambda)}  \\
  =~&~\limsup_{T\rightarrow\infty}\DTV{\mu_{\Lambda}}{\lsample_T(\Lambda)} \tag{by \eqref{eq:dtv-lsample-converge}} \\
  \le~&~\limsup_{T\rightarrow\infty}\DTV{\mu_{\Lambda}}{X_{T,0}(\Lambda)} + \limsup_{T\rightarrow\infty}\DTV{X_{T,0}(\Lambda)}{\lsample_T(\Lambda)}   \tag{by triangle inequality}\\
   =~&~\limsup_{T\rightarrow\infty}\DTV{\mu_{\Lambda}}{X_{T,0}(\Lambda)}   \tag{by \Cref{lemma:lsample-coloring-identical}}\\
   =~&~0, \tag{by \eqref{eq:dtv-convergence-irreducibility-2}}
\end{align*}
concluding the proof of the lemma.
\end{proof}

\subsection{Efficiency of the local sampler}

Now, we proceed to proving the efficiency of our local sampler for $q$-colorings. Our proof is done by designing a carefully-chosen potential function that relates to the state of the algorithm, and showing that such a potential function decays in expectation at each step as the algorithm evolves.

\begin{lemma}[efficiency of the local sampler for $q$-colorings]\label{lemma:q-coloring-efficiency}
 Suppose that the input $q$-coloring instance $G=(V,E)$ satisfies $q\geq 65\Delta$ where $\Delta\geq 1$ is the maximum degree of $G$. Then for any subset $\Lambda\subseteq V$, $\lsample(\Lambda)$ terminates almost surely within expected time $|\Lambda|\cdot \Delta^2 q$.
\end{lemma}
\begin{proof}

Following the proof of \Cref{lemma:lsample-coloring-identical}, we can couple the randomness of the algorithm and $(X_t)_{t \leq 0}$ in such a way that the following two invariants hold:

\begin{enumerate}
 \item Whenever $\chk(t, c)$ is called for some timestamp $t \leq 0$, some color $c \in [q]$ and \emph{terminates}, the outcome is $1$ if and only if $X_t(v_{i(t)}) = c$; 
\item Whenever $\resolve(t)$ is called for some timestamp $t \leq 0$ and \emph{terminates}, its output is $X_t(v_{i(t)})$. 
\end{enumerate}

Note that here $(X_t)_{t \leq 0}$ is viewed as the limiting process of $(X_t)_{-T\leq t\leq 0}$ as $T\to \infty$. When $q\geq \Delta+1$, the Markov chain $\+P(T)$ is ergodic, and the distribution specified by $(X_t)_{t \leq 0}$ projecting onto any finite subset of timestamps is always well-defined.

We then strengthen the lemma and prove that under any realization of the outcome of updates in $(X_t)_{t \leq 0}$, the desired result that for any $\Lambda\subseteq V$, $\lsample(\Lambda)$ terminates almost surely within expected time $|\Lambda|\cdot \Delta^2 q$ always holds.

Note that the state $\Pi$ of the algorithm $\lsample(\Lambda)$ can be defined by:
\begin{itemize}
\item the state of the global data structures $M$ and $L$, and
\item the current stack of recursive calls to $\resolve(t)$ and $\chk(t,c)$.
\end{itemize}

We introduce a potential function to track the state of the algorithm.

\begin{definition}[potential function associated with the state of the algorithm]\label{definition:potential-function}
Let $\Phi = \Phi(\Pi) \in \mathbb{R}$ be the potential function, defined as follows. Initially, $\Phi = 0$.
\begin{itemize}
\item For each \emph{distinct} $\resolve(t)$ call such that $M(t)=\perp$ in the current stack of the algorithm, update $\Phi \gets \Phi + C_1\Delta$.
\item For each \emph{distinct} $\chk(t,c)$ call such that $M(t)=\perp$ and $c\in L(t)$ in the current stack of the algorithm, update $\Phi \gets \Phi + C_2$.
\item For each $t \in \mathbb{Z}$ such that $M(t) = \perp$ and \emph{there is currently no $\resolve(t)$ call in the stack}, update $\Phi \gets \Phi + C_3 \cdot (65\Delta - |L|)$; that is, add $C_3$ to the potential function for each removed color in the list of an yet undetermined timestamp.
\end{itemize}
Here, $C_1,C_2,C_3$ are universal constants to be determined later in this proof.
\end{definition}

At the start of the algorithm, for each $v \in \Lambda$, the only call is $\resolve(\pred_0(v))$. Thus, the potential function is initially set to $\Phi = |\Lambda| \cdot C_1 \Delta$, as given by \Cref{definition:potential-function}. We will show that $\Phi$ contracts in expectation throughout the algorithm's execution.

 Without loss of generality, we only consider \emph{active} calls of the form:
\begin{itemize}
    \item $\resolve(t)$ such that $M(t)=\perp$;
    \item $\chk(t,c)$ such that $M(t)=\perp$ and $c\in L(t)$,
\end{itemize}
as the remaining calls directly terminate by memoization, and we attribute the running time of such calls to the procedure that incurred them. The algorithm terminates when there are no active $\resolve(t)$ or $\chk(t,c)$ calls remaining in the stack. 

We also decompose the execution of the algorithm into \emph{steps}, where in each step, we reveal certain random choices made by the algorithm, and the order of these random choices may differ from the execution order of the algorithm. We analyze the expected change of $\Phi$ in several cases.
\begin{itemize}
\item Suppose the top of the stack is a $\resolve(t)$ call for some $t \in \mathbb{Z}$ such that $M(t)=\perp$. This call proceeds by calling $\eval^{\+O}(v_{i(t)};L(t))$. We define each iteration of the \textbf{while} loop in $\eval^{\+O}(v_{i(t)};L(t))$ as a single \emph{step}, and we track the expected change in $\Phi$ after each step.

   Since $|L| = |L(t)| \geq 50\Delta$ at the beginning, and $|S_t|$ has size at most $\Delta$, we always have $|L| \geq 49\Delta$ at the start of any iteration. Therefore, the condition in \Cref{Line:evaluate-coloring-check} of $\eval^{\+O}(v_{i(t)};L(t))$ is satisfied with probability at least $\frac{49}{50}$, regardless of the realization of $S_t$. This allows the number of iterations to be stochastically dominated by $\mathrm{Geo}(\frac{49}{50})$.

Hence, in each step:
\begin{itemize}
    \item $\Delta$ calls to $\chk(t,c)$ are added to the stack, increasing the potential function by $C_2\Delta $.
    \item With probability at least $\frac{49}{50}$, the $\resolve(t)$ call is removed from the stack, decreasing the potential function by $C_1\Delta$.
\end{itemize}
We remark that during these steps, the potential function is not affected by the change in the size of $L(t)$ as $\resolve(t)$ is already in the stack of the algorithm.

Therefore, the expected change in the potential function per step is at most:
\[
C_2\Delta - \frac{49}{50} \cdot C_1\Delta .
\]

\item Suppose the top of the stack is a $\chk(t,c)$ call for some $t \in \mathbb{Z}$ and $c \in [q]$ such that $M(t)= \perp$ and $c \in L(t)$. 
\begin{itemize}
\item If $|L(t)| = 50\Delta$, then the $\chk(t,c)$ call is removed from the stack, decreasing the potential function by $C_2$.
\begin{itemize}
    \item If there currently is a $\resolve(t)$ call in the stack, then the total change in the potential function is simply $-C_2$.
    \item Otherwise, a new $\resolve(t)$ call is added to the stack, increasing the potential function by $C_1\Delta$. Additionally, since the removed color in the list $L(t)$ no longer contributes to the potential function when this new $\resolve(t)$ call is added, the potential function further decreases by $C_3 \cdot (65\Delta - 50\Delta) = 15\Delta \cdot C_3$. Therefore, the total change in the potential function is:
\[
C_1\Delta - C_2 - 15\Delta \cdot C_3.
\]
\end{itemize}
\item Otherwise, we assume $M(t)\neq \perp, c\in L(t)$ and that $\abs{L(t)}>50\Delta$. This call proceeds by calling $\eval^{\+O}(v_{i(t)},c;L(t))$. According to $\eval^{\+O}(v_{i(t)},c;L(t))$:
\begin{itemize}
    \item With probability $1-\frac{2}{\abs{L}}\geq 1-\frac{1}{25\Delta}$, $\eval^{\+O}(v_{i(t)},c;L(t))$ directly terminates and returns $0$. In this case, the $\chk(t,c)$ call is removed from the stack, decreasing the potential function by $C_2$. Also, due to \Cref{Line:check-list-remove} of $\chk(t,c)$, $c$ is removed from $L(t)$, increasing the potential function by at most $C_3$. Hence, in this case, the change in the potential function is at most $C_3-C_2$.
    \item With probability $\frac{2}{\abs{L}}\leq \frac{1}{25\Delta}$, $\eval^{\+O}(v_{i(t)},c;L(t))$ first adds at most $\Delta$ recursive calls of the form $\chk(t',c')$, then uses a Bernoulli factory algorithm for division to return $1$ with probability $\frac{|L|/2}{\abs{L\setminus S_t}}$, accessing \Cref{Alg:access-p-coin} as an ${|L\setminus S_{t}|}/{|L|}$-probability coin, where $S_t$ is defined in \eqref{eq:definition-St}. Note that after fixing the realization of $(X_t)_{t\leq 0}$, the outcome of \Cref{Alg:access-p-coin} is solely determined by \Cref{Line:access-p-coin-sample}, which are clearly independent of the randomness within recursive calls to \Cref{Line:access-p-coin-sample}. Therefore, we can first determine the randomness within this Bernoulli factory algorithm and consider this as a \emph{step}. 
    
    According to \Cref{lemma:correctness-Bernoulli-factory}, the expected number of calls to \Cref{Alg:access-p-coin} is at most $21$. Each such call may in turn generate up to $\Delta$ recursive calls of the form $\chk(t',c')$, with each recursive call contributing at most $C_2$ to the potential function. At the same time, the original $\chk(t,c)$ call is removed from the stack, decreasing the potential function by $C_2$; and, due to \Cref{Line:check-list-remove}, the color $c$ is removed from $L(t)$, increasing the potential function by $C_3$. Thus, the total expected change in the potential function in this case is at most:
    \[
    (22\Delta-1)C_2+C_3.
    \]
   Moreover, since at most $\Delta q$ distinct recursive calls of the form $\chk(t',c')$ can be generated where $t'=\pred_u(t)$ for some $u\in N(v_{i(t)})$, the maximum change in the potential function is at most:
    \[
    (\Delta q-1) C_2+C_3.
    \]
\end{itemize}
Summarizing, in this case, the expected change in the potential function per step is at most:
\[
\left(1-\frac{1}{25\Delta}\right)\cdot (C_3-C_2)+\frac{1}{25\Delta}\cdot \left((22\Delta-1)C_2+C_3\right)=C_3-\frac{3C_2}{25}.
\]
\end{itemize}
\end{itemize}
We then set the constants as follows:
\[
C_1=30, \quad C_2=25, \quad C_3=2.
\]
In this case, it is easy to see that the expected change of the potential function after each step is at most $-1$.
Let $\Phi_0,\Phi_1,\dots$ be the random sequence of the potential function after the $i$-th step of the algorithm. Define $\Psi_i=\Phi_i+i$ for each $i\geq 0$. By the previous analysis, we have $\{\Psi_i\}_{i\geq 0}$ forms a supermartingale with each absolute increment $|\Psi_{i+1}-\Psi_{i}|$ upper bounded by some $K=\poly(q,\Delta)$. Let $\tau=\min\{i>0\mid \Phi_i=0\}=\min\{i>0\mid \Psi_i=i\}$ be a stopping time. By the Azuma-Hoeffding inequality, we have for each fixed $\varepsilon>0$,
\[
\Pr{\tau>(1+\varepsilon) 30|\Lambda|\cdot \Delta}\leq \Pr{\Psi_{(1+\varepsilon)30|\Lambda|\cdot \Delta}>\Psi_0+30\varepsilon|\Lambda|\cdot \Delta}\leq \mathrm{exp}\left(\frac{-30\varepsilon^2|\Lambda|\Delta}{(1+\varepsilon) K^2}\right),
\]
from which we can conclude that $\E{\tau}<\infty$, allowing us to apply Doob's optional stopping theorem:
\[
30|\Lambda|\cdot \Delta=\E{\Psi_0}\geq \E{\Psi_\tau}=\E{\tau},
\]
meaning that the algorithm executes at most $30|\Lambda|\cdot \Delta$ steps in expectation. Consequently, by \Cref{lemma:correctness-Bernoulli-factory}, the expected running time of the algorithm is bounded by $O(\abs{\Lambda}\cdot \Delta q^2)$.
\end{proof}

Lemmas \ref{lemma:coloring-local-correctness} and \ref{lemma:q-coloring-efficiency} together establish \Cref{theorem:coloring}. For \Cref{theorem:inference-coloring}, observe that all algorithms and proofs in this section naturally extend to the setting of list colorings, where each vertex $v \in V$ has its own color list $Q_v$ satisfying $\abs{Q_v} \geq 65\Delta$. By applying the same argument as in the proof of \Cref{theorem:inference-permissive}, \Cref{theorem:inference-coloring} follows.

\section{Conclusions and Open Problems}\label{sec:conclusions}
In this paper, we design new local samplers that go beyond the use of the local uniformity property by generalizing and refining the framework of backward deduction of Markov chains, i.e., the ``Coupling Towards The Past'' (CTTP) method. 
Specifically, we design the first local samplers for both spin systems with soft constraints in near-critical regimes, and uniform $q$-coloring under the near-critical condition of $q=O(\Delta)$ where $\Delta$ is the maximum degree of the graph. 
The proposed local samplers are perfect, achieve near-linear runtime, and admit direct applications to local algorithms for probabilistic inference within the same parameter regimes.

We leave the following open problems and directions for future work: 
\begin{itemize}
    \item While our local sampler performs well in near-critical regimes for spin systems using backward deduction of Glauber dynamics, it has been shown that forward simulation of Glauber dynamics mixes rapidly for the Ising model up to the uniqueness threshold. Can we improve the analysis of our current algorithm, design new local samplers that are efficient up to this critical threshold, or prove a lower bound showing that this is intractable for local samplers?
    \item 
    Both the CTTP (Coupling Towards The Past) and the CFTP (Coupling From The Past) approaches are based on grand coupling and can yield perfect sampling algorithms.
    However, the local sampler via CTTP requires a more restrictive grand coupling that admits a local implementation. 
    For $q$-colorings, we show that CTTP with such a local grand coupling can indeed achieve the $q=O(\Delta)$ threshold; in particular, we establish $q\ge 65\Delta$. 
    In contrast, the best known result for CFTP with a global grand coupling is  $q \geq (\frac{8}{3} + o(1))\Delta$, as achieved in~\cite{jain2021perfectly}, which attains a better constant factor.
    This raises a natural question: can one obtain a local sampler that matches or even surpasses the global grand coupling threshold?
    \item Local samplers, as introduced in~\cite{AJ22,feng2023towards}, have found a wide range of applications, as discussed in the introduction. Our work further shows that local samplers can also imply efficient local algorithms for probabilistic inference. We hope to see more applications of local samplers, particularly in the design of distributed and parallel algorithms. 
\end{itemize}

\ifarxiv{
\section*{Acknowledgement}
Chunyang would like to thank Jingcheng Liu and Yixiao Yu for pointing out Lubetzky and Sly's result, which analyzes the cutoff phenomenon of the Ising model in near-critical regimes.
}

\newpage

\bibliographystyle{alpha}
\bibliography{references} 

\newcommand{\etalchar}[1]{$^{#1}$}
\begin{thebibliography}{FGW{\etalchar{+}}23}

\bibitem[AFF{\etalchar{+}}24]{anand24approximate}
Konrad Anand, Weiming Feng, Graham Freifeld, Heng Guo, and Jiaheng Wang.
\newblock {Approximate Counting for Spin Systems in Sub-Quadratic Time}.
\newblock In {\em ICALP}, volume 297, pages 11:1--11:20. Schloss Dagstuhl -- Leibniz-Zentrum f{\"u}r Informatik, 2024.

\bibitem[AFG{\etalchar{+}}25]{anand2025sinkfree}
Konrad Anand, Graham Freifeld, Heng Guo, Chunyang Wang, and Jiaheng Wang.
\newblock Sink-free orientations: a local sampler with applications, 2025.
\newblock to appear in RANDOM 2025.

\bibitem[AGPP23]{anand23sphere}
Konrad Anand, Andreas G{\"{o}}bel, Marcus Pappik, and Will Perkins.
\newblock Perfect sampling for hard spheres from strong spatial mixing.
\newblock In {\em RANDOM}, volume 275 of {\em LIPIcs}, pages 38:1--38:18. Schloss Dagstuhl - Leibniz-Zentrum f{\"{u}}r Informatik, 2023.

\bibitem[AJ22]{AJ22}
Konrad Anand and Mark Jerrum.
\newblock Perfect sampling in infinite spin systems via strong spatial mixing.
\newblock {\em SIAM Journal on Computing}, 51(4):1280--1295, 2022.

\bibitem[AJK{\etalchar{+}}22]{anari2022entropic}
Nima Anari, Vishesh Jain, Frederic Koehler, Huy~Tuan Pham, and Thuy{-}Duong Vuong.
\newblock Entropic independence: optimal mixing of down-up random walks.
\newblock In {\em {STOC}}, pages 1418--1430. {ACM}, 2022.

\bibitem[ALO20]{ALO20}
Nima Anari, Kuikui Liu, and Shayan {Oveis Gharan}.
\newblock Spectral independence in high-dimensional expanders and applications to the hardcore model.
\newblock In {\em {FOCS}}, pages 1319--1330. {IEEE}, 2020.

\bibitem[BC20]{bhandari2020improved}
Siddharth Bhandari and Sayantan Chakraborty.
\newblock Improved bounds for perfect sampling of k-colorings in graphs.
\newblock In {\em STOC}, pages 631--642, 2020.

\bibitem[BPR22]{biswas2022local}
Amartya~Shankha Biswas, Edward Pyne, and Ronitt Rubinfeld.
\newblock {Local Access to Random Walks}.
\newblock In {\em ITCS}, volume 215, pages 24:1--24:22. Schloss Dagstuhl -- Leibniz-Zentrum f{\"u}r Informatik, 2022.

\bibitem[BRY20]{amartya2020local}
Amartya~Shankha Biswas, Ronitt Rubinfeld, and Anak Yodpinyanee.
\newblock Local access to huge random objects through partial sampling.
\newblock In {\em ITCS}. Schloss Dagstuhl-Leibniz-Zentrum f{\"u}r Informatik, 2020.

\bibitem[CDM{\etalchar{+}}19]{chen2019improved}
Sitan Chen, Michelle Delcourt, Ankur Moitra, Guillem Perarnau, and Luke Postle.
\newblock Improved bounds for randomly sampling colorings via linear programming.
\newblock In {\em SODA}, page 2216–2234, USA, 2019. SIAM.

\bibitem[CE22]{CE22}
Yuansi Chen and Ronen Eldan.
\newblock Localization schemes: A framework for proving mixing bounds for markov chains (extended abstract).
\newblock In {\em FOCS}, pages 110--122. {IEEE}, 2022.

\bibitem[CFYZ21]{chen2021rapid}
Xiaoyu Chen, Weiming Feng, Yitong Yin, and Xinyuan Zhang.
\newblock Rapid mixing of {G}lauber dynamics via spectral independence for all degrees.
\newblock In {\em FOCS}, pages 137--148, 2021.

\bibitem[CFYZ22]{CFYZ22}
Xiaoyu Chen, Weiming Feng, Yitong Yin, and Xinyuan Zhang.
\newblock Optimal mixing for two-state anti-ferromagnetic spin systems.
\newblock In {\em FOCS}, pages 588--599. {IEEE}, 2022.

\bibitem[CLV20]{chen2020rapid}
Zongchen Chen, Kuikui Liu, and Eric Vigoda.
\newblock Rapid mixing of {G}lauber dynamics up to uniqueness via contraction.
\newblock In {\em FOCS}, pages 1307--1318. IEEE, 2020.

\bibitem[CLV21]{CLV21}
Zongchen Chen, Kuikui Liu, and Eric Vigoda.
\newblock Optimal mixing of {G}lauber dynamics: entropy factorization via high-dimensional expansion.
\newblock In {\em {STOC}}, pages 1537--1550. {ACM}, 2021.

\bibitem[CV25]{carlson2025flip}
Charlie Carlson and Eric Vigoda.
\newblock Flip dynamics for sampling colorings: Improving (11/6 — {$\varepsilon$}) using a simple metric.
\newblock In {\em SODA}, pages 2194--2212, 2025.

\bibitem[DL93]{dagum1993approximate}
Paul Dagum and Michael Luby.
\newblock Approximating probabilistic inference in bayesian belief networks is np-hard.
\newblock {\em Artif. Intell.}, 60(1):141–153, 1993.

\bibitem[DL97]{dagum1997optimal}
Paul Dagum and Michael Luby.
\newblock An optimal approximation algorithm for bayesian inference.
\newblock {\em Artif. Intell.}, 93(1–2):1–27, 1997.

\bibitem[FGW{\etalchar{+}}23]{feng2023towards}
Weiming Feng, Heng Guo, Chunyang Wang, Jiaheng Wang, and Yitong Yin.
\newblock Towards derandomising {M}arkov chain {M}onte {C}arlo.
\newblock In {\em FOCS}, pages 1963--1990. IEEE, 2023.

\bibitem[G{\v{S}}V16]{galanis2016inapproximability}
Andreas Galanis, Daniel {\v{S}}tefankovi{\v{c}}, and Eric Vigoda.
\newblock Inapproximability of the partition function for the antiferromagnetic {Ising} and hard-core models.
\newblock {\em Combinatorics, Probability and Computing}, 25(04):500--559, 2016.

\bibitem[Hub98]{huber1998exact}
Mark Huber.
\newblock Exact sampling and approximate counting techniques.
\newblock In {\em STOC}, pages 31--40, 1998.

\bibitem[Hub16]{Hub16Bernoulli}
Mark Huber.
\newblock Nearly optimal {B}ernoulli factories for linear functions.
\newblock {\em Combin. Probab. Comput.}, 25(4):577--591, 2016.

\bibitem[HWY22]{HWY22a}
Kun He, Chunyang Wang, and Yitong Yin.
\newblock Sampling lovász local lemma for general constraint satisfaction solutions in near-linear time.
\newblock In {\em FOCS}, pages 147--158. IEEE, 2022.

\bibitem[HWY23]{HWY22c}
Kun He, Chunyang Wang, and Yitong Yin.
\newblock Deterministic counting lovász local lemma beyond linear programming.
\newblock In {\em SODA}, pages 3388--3425. SIAM, 2023.

\bibitem[Isi25]{Ising1925Beitrag}
Ernst Ising.
\newblock Beitrag zur theorie des ferromagnetismus.
\newblock {\em Zeitschrift f{\"u}r Physik}, 31:253--258, 1925.

\bibitem[Jer95]{jerrum1995simple}
Mark Jerrum.
\newblock A very simple algorithm for estimating the number of k‐colorings of a low‐degree graph.
\newblock {\em Random Struct. Algorithms}, 7(2):157–165, September 1995.

\bibitem[JSS21]{jain2021perfectly}
Vishesh Jain, Ashwin Sah, and Mehtaab Sawhney.
\newblock Perfectly sampling k>(8/3+o(1)) $\delta$-colorings in graphs.
\newblock In {\em STOC}, pages 1589--1600, 2021.

\bibitem[LPW17]{levin2017markov}
David~A. Levin, Yuval Peres, and Elizabeth~L. Wilmer.
\newblock {\em Markov chains and mixing times}.
\newblock American Mathematical Society, Providence, RI, 2017.

\bibitem[LS16]{Lubetzky2016information}
Eyal Lubetzky and Allan Sly.
\newblock Information percolation and cutoff for the stochastic {I}sing model.
\newblock {\em J. Amer. Math. Soc.}, 29(3):729--774, 2016.

\bibitem[LS17]{Lubetzky2017universal}
Eyal Lubetzky and Allan Sly.
\newblock {Universality of cutoff for the Ising model}.
\newblock {\em The Annals of Probability}, 45(6A):3664 -- 3696, 2017.

\bibitem[Mor21]{morina2021extending}
G.~Morina.
\newblock {\em Extending the {Bernoulli} {Factory} to a {Dice} {Enterprise}}.
\newblock University of Warwick, 2021.

\bibitem[MSW22]{morters2022sublinear}
Peter M\"{o}rters, Christian Sohler, and Stefan Walzer.
\newblock {A Sublinear Local Access Implementation for the Chinese Restaurant Process}.
\newblock In {\em RANDOM}, volume 245, pages 28:1--28:18. Schloss Dagstuhl -- Leibniz-Zentrum f{\"u}r Informatik, 2022.

\bibitem[NP05]{Nacu2005FastSO}
\c{S}erban Nacu and Yuval Peres.
\newblock Fast simulation of new coins from old.
\newblock {\em Ann. Appl. Probab.}, 15(1A):93--115, 2005.

\bibitem[PW96]{propp1996exact}
James~Gary Propp and David~Bruce Wilson.
\newblock Exact sampling with coupled markov chains and applications to statistical mechanics.
\newblock {\em Random Structures \& Algorithms}, 9(1-2):223--252, 1996.

\bibitem[SS14]{SS14}
Allan Sly and Nike Sun.
\newblock Counting in two-spin models on {$d$}-regular graphs.
\newblock {\em Ann. Probab.}, 42(6):2383--2416, 2014.

\bibitem[Vig99]{vigoda1999improved}
Eric Vigoda.
\newblock Improved bounds for sampling colorings.
\newblock In {\em FOCS}, pages 51--59, 1999.

\bibitem[VN51]{vonNeumann1951}
John Von~Neumann.
\newblock various techniques used in connection with random digits.
\newblock {\em Appl. Math Ser}, 12(36-38):3, 1951.

\end{thebibliography}

\newpage
\appendix
\section{Bernoulli factory method for simulating probability}\label{sec:bernoulli-factory}

In this section, we explicitly provide the construction of the Bernoulli factory for division, and formally prove \Cref{lemma:correctness-Bernoulli-factory}.

For $\xi\in [0,1]$ we denote by $\+O_{\xi}$ a coin (or oracle) that returns 
$1$ (heads) with probability $\xi$, and $0$ (tails) with probability $1-\xi$, independently on each call.

Our construction of the Bernoulli factory for division is based on the method from~\cite{morina2021extending}, which in turn builds on the subtraction Bernoulli factory from~\cite{Nacu2005FastSO}. That construction is itself based on a linear Bernoulli factory, for which we adopt the implementation from~\cite{Hub16Bernoulli}.

We first introduce the linear Bernoulli factory, which transforms $\+{O}_\xi$ to $\+{O}_{C\xi}$ for a $C>1$ with the promise that $C\xi\leq 1$.
We adopt the construction of linear Bernoulli factory in \cite{Hub16Bernoulli} described in \Cref{Alg:lbf}.
Its correctness and efficiency is guaranteed as follows.

\begin{algorithm}
  \caption{$\linbf{}(\+{O}, C, \zeta)$\cite{Hub16Bernoulli}} \label{Alg:lbf}
  \KwIn{a coin $\+{O}=\+{O}_{\xi}$ with unknown $\xi$,  $C>1$ and a slack $\zeta>0$, with promise that $C\xi\leq 1-\zeta$;}  
  \KwOut{a random value from $\+O_{C\xi}$;}
  $k\gets 4.6/\zeta,\zeta\gets\min\{\zeta,0.644\},i\gets 1$\;
  \Repeat{$i=0$ or $R=0$}
  {
    \Repeat{$i=0$ or $i\geq k$}
    {
        draw $B\gets \+{O}$, $G\gets \textsf{Geo}\left(\frac{C-1}{C}\right)$\;\label{Line:linear-bf-draw}
        \tcp{\small$G$ is drawn according to geometric distribution with parameter $\frac{C-1}{C}$}
        $i\gets i-1+(1-B)G$\;\label{Line:linear-bf-update}
    }
    \If{$i\geq k$}
    {
        draw $R\gets \textsf{Bernoulli}\left((1+\zeta/2)^{-i}\right)$\;
        $C\gets C(1+\zeta/2),\zeta\gets \zeta/2,k\gets 2k$\label{Line:linear-bf-update-C}\;
    }
  }
    \textbf{return} $\one{i=0}$\;
\end{algorithm}

\begin{proposition}[\text{\cite[Theorem 1]{Hub16Bernoulli}}]\label{linbfcor}
Given access to a coin $\+{O}_{\xi}$, given as input $C>1$ and $\zeta>0$, with promise that $C\xi\leq 1-\zeta$, 
$\linbf{}(\+{O}, C, \zeta)$ terminates with probability $1$ and returns a draw of $\+{O}_{C\xi}$. Moreover, the expected number of calls to $\+O_{\xi}$ is at most $9.5C\zeta^{-1}$.
\end{proposition}



Building upon the linear Bernoulli factory, a subtraction Bernoulli factory $\subbf{}(\+{O}_{\xi_1}, \+{O}_{\xi_2}, \zeta)$ was proposed in~\cite{Nacu2005FastSO}. It transforms two coins $\+{O}_{\xi_1}$ and $\+{O}_{\xi_2}$—under the promise that $\xi_1 - \xi_2 \geq \zeta > 0$—into a new coin $\+{O}_{\xi_1 - \xi_2}$.

We implement this procedure using the linear Bernoulli factory defined above:

\begin{itemize}
\item $\subbf{}\left(\+{O}_{\xi_1}, \+{O}_{\xi_2}, \zeta\right) = 1- \linbf{}\left(\+{O}_{(1 - \xi_1 + \xi_2)/2}, 2, \zeta\right)$,
\end{itemize}

where the coin $\+{O}_{(1 - \xi_1 + \xi_2)/2}$ is realized using $\+{O}_{1/2}$, $\+{O}_{\xi_1}$, and $\+{O}_{\xi_2}$ as follows: if $\+{O}_{1/2} = 1$, return $1- \+{O}_{\xi_1}$; otherwise, return $\+{O}_{\xi_2}$.

The correctness and efficiency of this procedure is guaranteed as follows.

\begin{corollary}[\text{c.f.~\cite[Proposition 14, (iv)]{Nacu2005FastSO}}]\label{subbfcor}
Given access to two coins $\+{O}_{\xi_1}$ and $\+{O}_{\xi_2}$, and given as input $\zeta>0$, with promise that $\xi_1-\xi_2\geq \zeta$, 
$\subbf{}\left(\+{O}_{\xi_1}, \+{O}_{\xi_2},\zeta\right)$ terminates with probability $1$ and returns a draw of $\+{O}_{\xi_1-\xi_2}$. Moreover, the expected number of call to $\+O_{\xi}$ and $\+O_{\xi_2}$ is at most $9.5\zeta^{-1}$ each.
\end{corollary}

\sloppy Now we can formally give the construction of the Bernoulli factory for division in~\cite{morina2021extending}, $\berdiv(\+O_{\xi},p,\zeta)$, which transforms a coin $\+O_{\xi}$  with the promise that $\xi-p\geq \zeta$ into a new coin $\+O_{p/\zeta}$. The construction is given as follows~\cite[Algorithm 9]{morina2021extending}.

Repeat the following until a value is returned:
\begin{itemize}
    \item If a draw $I$ from $\+O_{1/2}$ is $1$: return $1$ with probability $p$. 
    \item Otherwise, return $0$ if a draw from $\+O_{\xi-p}$ returns $1$,
\end{itemize}
where $\+O_{\xi-p}$ is implemented using $\subbf\left(\+{O}_{\xi}, \+{O}_{p},\zeta\right)$. 

The correctness and efficiency of this procedure is guaranteed as follows, which directly proves \Cref{lemma:correctness-Bernoulli-factory}.
\begin{corollary}[\text{c.f.~\cite[Proposition 2.24]{morina2021extending}}]\label{bfdivcor}
Given access to a coin $\+{O}_{\xi}$, and given as input $0\leq p\leq 1,\zeta>0$, with promise that $\xi-p\geq \zeta$, 
$\berdiv{}\left(\+{O}_{\xi}, p,\zeta\right)$ terminates with probability $1$ and returns a draw of $\+{O}_{p/\xi}$. Moreover, the expected number of calls to $\+O_{\xi}$ is at most $9.5\xi^{-1}\zeta^{-1}$.
\end{corollary}

\clearpage

\end{document}